\newcommand{\F}{\mathbb{F}}
\newcommand{\N}{\mathbb{N}}
\newcommand{\R}{\mathbb{R}}
\newcommand{\Z}{\mathbb{Z}}
\newcommand{\Id}{\mathbb{1}}
\newcommand{\abs}[1]{\left|#1\right|}
\newcommand{\acco}[1]{\left\{#1\right\}}
\newcommand{\cro}[1]{\left[#1\right]}
\newcommand{\doublecro}[1]{\left\llbracket #1 \right\rrbracket } 
\newcommand{\I}{\mathbb{I}}  
\newcommand{\floor}[1]{\left\lfloor #1 \right\rfloor}
\newcommand{\Hom}{\operatorname{Hom}}
\newcommand{\incl}{\subseteq}
\newcommand{\joliC}{\mathscr{C}}
\newcommand{\mochec}{\mathfrak{c}}
\newcommand{\joliD}{\mathscr{D}}
\newcommand{\joliM}{\mathscr{M}}
\newcommand{\joliP}{\mathscr{P}}
\newcommand{\joliS}{\mathscr{S}}
\newcommand{\pa}[1]{\left(#1\right)}
\newcommand{\ud}[1]{\rotatebox[origin=c]{180}{$#1$}}
\newcommand{\tr}{\operatorname{tr}}
\newcommand{\ceil}[1]{\left\lceil#1 \right\rceil}
\newcommand{\polyI}{\mathcal{I}}
\newcommand{\sca}{\Theta}
\newtheorem{prop}{Proposition}
\newtheorem{thm}{Theorem}
\begin{document}

\author{Johannes~G{\"u}tschow\inst{1} \and Vincent~Nesme\inst{2}\email{vnesme@gmail.com} \and Reinhard~F.~Werner\inst{1}}
\title{The fractal structure of cellular automata on Abelian groups}

\institute{Institut f{\"u}r Theoretische Physik, Universit\"at Hannover
\and
Quantum information theory, Universit\"at Potsdam
}

\maketitle

\begin{abstract}

It is well-known that the spacetime diagrams of some cellular automata have a fractal structure: for instance Pascal's triangle modulo 2 generates a Sierpinski triangle. Explaining the fractal structure of the spacetime diagrams of cellular automata is a much explored topic, but virtually all of the results revolve around a special class of automata, whose typical features include irreversibility, an alphabet with a ring structure, a global evolution that is a ring homomorphism, and a property known as (weakly) $p$-Fermat.  
The class of automata that we study in this article has none of these properties. Their cell structure is weaker, as it does not come with a multiplication, and they are far from being $p$-Fermat, even weakly. However, they do produce fractal spacetime diagrams, and we explain why and how.

\end{abstract}

\section*{Introduction}

The fractal structure of cellular automata (CAs) has been a topic of interest for several decades. In many works on linear CAs, the authors present ways to calculate the fractal dimension or to predict the state of an arbitrary cell at an arbitrary time step, with much lower complexity than by running the CA step by step; However, their notions of linearity are quite different. Often only CAs that use states in $\Z_p$ are studied.\footnote{We use the simple notation $\Z_d$ for the cyclic group of order $d$, instead of $\Z/d\Z$, as we are concerned with finite groups only.} 
Other approaches are more general, but still make certain assumptions on the time evolution or the underlying structure of the CA. In this work we try to loosen these restrictions as far as possible. We consider one-dimensional linear CAs whose alphabet is an abelian group. We show how they can be described by $n\times n$ matrices with polynomial entries and use this description to derive a recursion relation for the iterations of the CA. This recursion relation enables us to formulate the evolution of the spacetime diagram as a matrix substitution system, which in turn gives us the means to calculate the fractal dimension of the spacetime diagram.
  
  Most of the methods we employ are commonly used in the study of CAs. To prove that the spacetime diagram converges we use elementary graph theory and study the graph associated with the matrix substitution system of the CA. The matrix substitution system is obtained by a recursion formula for the iterations of the CA and grouping of cells.
  
  Our interest in the fractal structure of CAs on abelian groups stems from our study of Clifford quantum cellular automata (CQCAs) \cite{SchlingemannCQCA}. We first noticed the self-similar structure while studying their long time behaviour \cite{Guetschow2009,Guetschow2009a}.  A CQCA maps Pauli matrices to tensor products of Pauli matrices times a phase. If we neglect the phase, we can identify the Pauli matrices $X$, $Y$, $Z$ with the elements of $\Z^2_2$ via the mapping $X\mapsto\tbinom{1}{0},\,Y\mapsto\tbinom{1}{1},\,Z\mapsto\tbinom{0}{1},\,\Id\mapsto\tbinom{0}{0}$. Using this mapping we can simulate CQCAs with linear CAs on the alphabet $\Z_2^2$.  The CAs corresponding to CQCAs have to fulfill rather strong conditions: they have to be reversible and preserve a symplectic form which encodes the commutation relations of the Pauli matrices \cite{SchlingemannCQCA}. While our analysis is now much more general, our main example $\sca$, whose spacetime diagram is shown in figure~\ref{fig:theta}, is the classical counterpart of a CQCA.

  \begin{figure}[htbp]
    \centering
    \subfloat[Time evolution of $\sca$ with scalars in $\Z_2$.]{\label{fig:theta}\includegraphics[width=\textwidth/2-3mm]{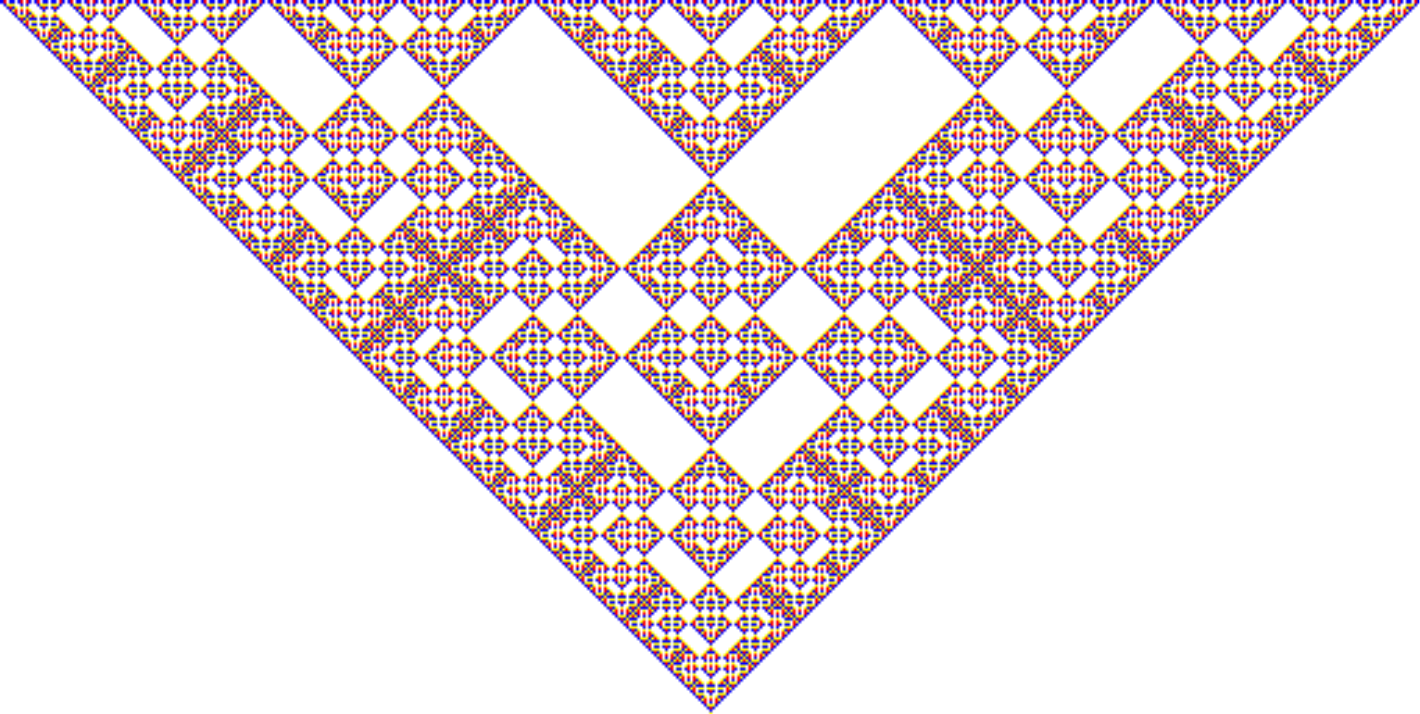}}
    \hskip 3mm 
    \subfloat[Time evolution of $\sca$ with scalars extended to $\Z_4$.]{\label{fig:theta_p4}\includegraphics[width=\textwidth/2-3mm]{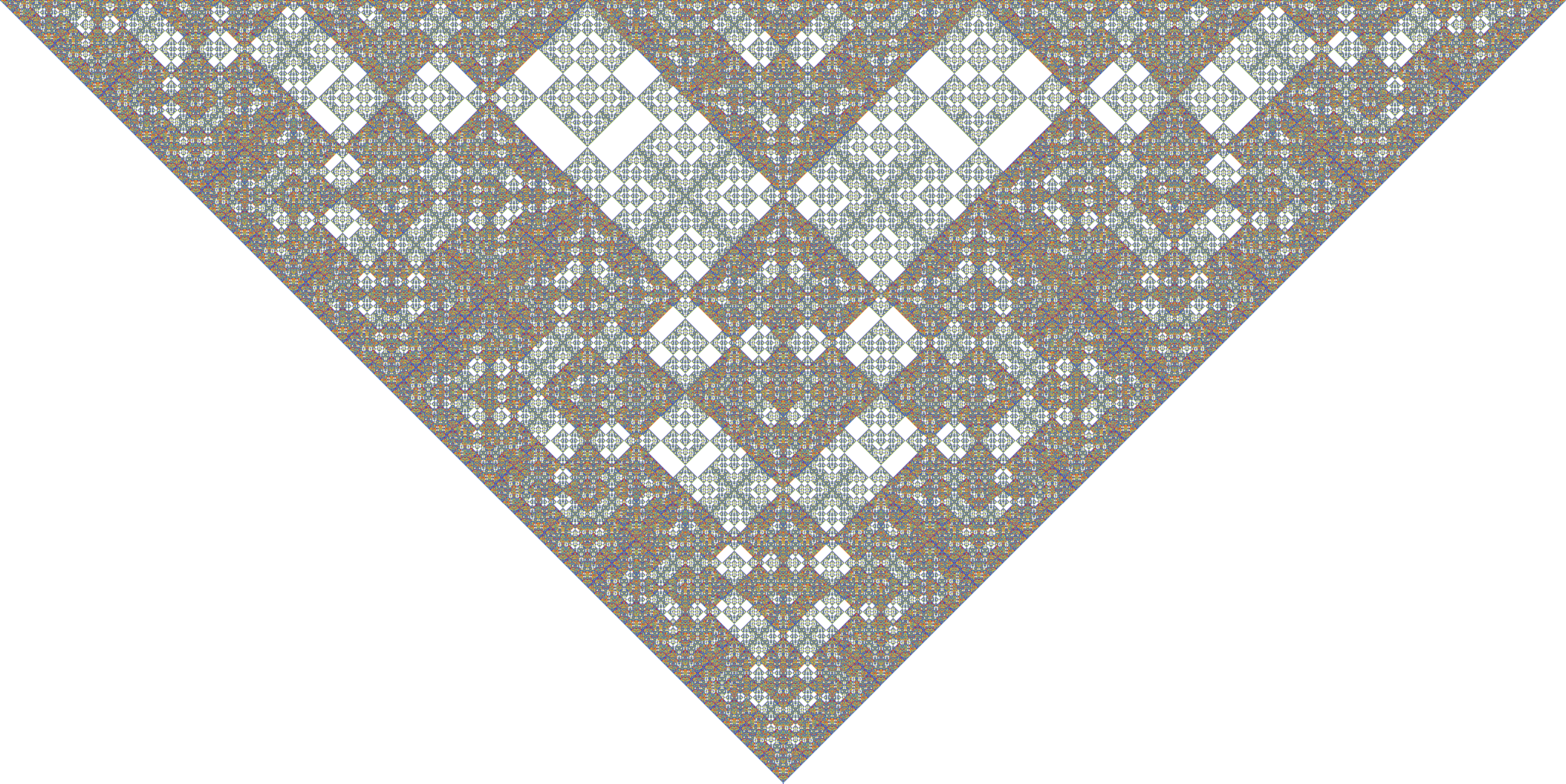}}
    \caption{(a) is a projection of (b) induced by $\Z_4^2\twoheadrightarrow\Z_2^2$.}
  \end{figure}
      
  Our paper is organized as follows: in section~\ref{section_definitions} we give our definition of a linear cellular automaton, introduce the formalism we will be working with, and state the main result: every linear cellular automaton has a fractal structure.  We will also introduce the example $\sca$ which will be the focal point of this article. In section~\ref{section_special_scheme}, we will give an intuitive idea as to why the spacetime diagram of $\Theta$ exhibits a fractal structure.  We will then proceed, in section~\ref{section_recursion}, to expose an algorithm taking as input the local transition rule and outputting a description of the spacetime diagram.  This allows us to compute salient features of these fractals, such as their fractal dimension and their average color.

\section{Definitions}\label{section_definitions}
\subsection{Generalities on summable automata}
\subsubsection{Monoids}
We want to discuss ``summable automata'', for which it makes sense to talk about the influence of a single cell on every other cell, and where the global transition function can be reconstructed by ``summing'' all these influences.  So, if $\Sigma$ denotes the alphabet, instead of the usual local transition function $\Sigma^I\to\Sigma$, a summable automaton is naturally defined by a function $\Sigma\to\Sigma^I$.  What is then the minimal structure on $\Sigma$ that would make such a definition work?  These influences have to be ``summed'', so we need an operation on $\Sigma$.  Since the strip is infinite, an infinitary operation would do, but that wouldn't give us much to work with.  Instead, it seems reasonable to consider a binary operation $+$.  In the same spirit, when we think of the superposition of influences coming from each cell, 
no notion of order between the cells is involved; even if in the one-dimensional case a natural order can be put on the cells, it would be less than clear what to do in higher dimensions.  We require therefore that $+$ be associative and commutative.  The last requirement comes from the fact that, given only the global transition function, we want to be able to isolate the influence of one cell; that is why we demand that $+$ have an identity element, which makes now $\pa{\Sigma,+}$ an Abelian monoid.  Of course, in order for all of this to be relevant, the transition function has to be a morphism.

Let $I$ be some finite subset of $\Z$ and $f$ a morphism from $\Sigma$ to $\Sigma^I$.  From $f$ one can define the global transition function as an endomorphism $F$ of $\Sigma^{\Z}$ by 
\begin{equation}\label{F_and_f}
F:\pa{\begin{array}{rcl}
\Sigma^{\Z}&\to&\Sigma^{\Z} \\
r=\pa{r_n}_{n\in \Z}&\mapsto &\pa{\displaystyle\sum_{i\in I} f(r_{n-i})_{i} }_{n\in \Z}
\end{array}}.
\end{equation}

Let $\sigma$ be the right shift on $\Sigma_\Z$, i.e. $\sigma(r)_n=r_{n-1}$.  We have $F\circ \sigma=\sigma\circ F$, which means $F$ is translation invariant.  Also, $F(r)_n$ depends only on the values $r_{n-i}$ for $i\in I$; since $I$ is finite, $F$ is a one-dimensional cellular automaton on the alphabet $\Sigma$, with neighborhood included in $-I$.  Conversely, if $F$ is an endomorphism of $\Sigma^\Z$ defining a cellular automaton over the alphabet $\Sigma$, then one can choose a neighborhood $I$, and define, for $i\in I$,
\begin{equation}
f(s)_i= F(\bar s)_{i},
\end{equation}
where $\bar s$ is the word of $\Sigma_\Z$ defined by ${\bar s}_n=\left\{\begin{array}{ll}s&\text{if $n=0$} \\e &\text{otherwise} \end{array} \right.$, $e$ denoting the neutral element of $\Sigma$.

\subsubsection{Groups}\label{section_groups}
We will now consider the case when $\Sigma$ is a (finite abelian) group.  For $p$ prime, let $\Sigma_p$ be the subgroup of $\Sigma$ of elements of order a power of $p$; then $\Sigma$ is isomorphic to $\prod\limits_p \Sigma_p$, and every endomorphism of $\Sigma^\Z$ factorises into a product of endomorphisms of the $\Sigma_p^\Z$'s.  It is therefore enough to study the case of the  (abelian) $p$-groups: let us assume $\Sigma$ is a $p$-group.

It is a well-known fact (see for instance section I-8 of \cite{lang}) that $\Sigma$ is isomorphic to $\Z_{p^{k_1}}\times \Z_{p^{k_2}}\times\cdots\times \Z_{p^{k_d}}$ with $k_d \geq k_{d-1} \geq \ldots \geq k_1= k$.  Now let us consider an endomorphism $\alpha$ of $\Sigma$ and let $e_j$ denote $(0,\ldots,0,1,0,\ldots,0)$, where the $1$ lies in position $j$. 
When $i\geq j$, there is a natural embedding $s_{i,j}$ of $\Z_{p^{k_i}}$ into $\Z_{p^{k_j}}$, namely the multiplication by $p^{k_j-k_i}\in\N$.  Since $e_j$ has order $p^{k_j}$, $\alpha(e_j)_i\in\Z_{p^{k_i}}$ has to be in the image of $s_{j,i}$ when $i\leq j$.  We can therefore associate to $\alpha$ the endomorphism of $\Z_{p^k}^{d}$ given by the matrix $A(\alpha)\in \joliM_d(\Z_{p^k})$ defined by $A(\alpha)_{i,j} =p^{k_j-k_i} \alpha\pa{e_j}_i$.

For instance, if $G$ is $\Z_{32}\times\Z_4\times\Z_2$, and $\alpha$ is defined by $\alpha(1,0,0)=(3,2,1)$, $\alpha(0,1,0)=(24,0,1)$ and $\alpha(0,0,1)=(16,2,0)$, then the corresponding matrix of $\joliM_3\pa{\Z_{32}}$ would be 
\begin{equation*}
  A(\alpha)=\pa{\begin{array}{ccc}
3 & 3 & 1 \\ 16 & 0 & 1 \\ 16 & 2 & 0 
\end{array}}.
\end{equation*}

Let us give a summary of the construction we have just exposed.

\begin{prop}
For every finite abelian $p$-group $G$ and endomorphism $\alpha$ of $G$, there are positive integers $k$ and $d$, an embedding $s$ of $G$ into $\Z_{p^k}^d$, and an endomorphism $A(\alpha)$ of $\Z_{p^k}^d$ such that the following diagram commutes:

\begin{equation}
\begin{CD}
G @>\alpha>> G \\
@VVsV @VVsV \\
\Z_{p^k}^d @>A(\alpha)>> \Z_{p^k}^d
\end{CD}
\end{equation}
\end{prop}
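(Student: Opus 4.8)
The proposition only records the construction carried out in the preceding paragraphs, so a proof amounts to assembling those pieces and making explicit the two points that were left implicit: that the map $s$ is a well-defined \emph{injective} homomorphism, and that the prescribed matrix $A(\alpha)$ is a well-defined endomorphism of $\Z_{p^k}^d$ for which the square commutes. The plan is as follows. First invoke the structure theorem to write $G\cong\Z_{p^{k_1}}\times\cdots\times\Z_{p^{k_d}}$ with the factors ordered so that $k_1\geq\cdots\geq k_d$, set $k:=k_1$ (so $p^k$ is the exponent of $G$ and $k\geq k_j$ for every $j$), and let $e_1,\dots,e_d$ be the standard generators.

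Next, define $s\colon G\to\Z_{p^k}^d$ factor by factor: on the $j$-th factor $\Z_{p^{k_j}}$, let $s$ act by ``multiplication by $p^{k-k_j}$'' into $\Z_{p^k}$ — that is, the embedding $s_{j,1}$ of the preceding paragraph. Because $k\geq k_j$, this map is well defined on $\Z_{p^{k_j}}$ and injective, hence $s$ itself is an injective homomorphism, with image the submodule $\bigoplus_j p^{k-k_j}\Z_{p^k}\leq\Z_{p^k}^d$.

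Then define $A(\alpha)\in\joliM_d(\Z_{p^k})$ by $A(\alpha)_{i,j}=p^{k_j-k_i}\,\alpha(e_j)_i$, where a negative exponent of $p$ is to be read as exact division: when $i<j$ we have $k_i\geq k_j$, and since $e_j$ has order $p^{k_j}$ its image $\alpha(e_j)_i$ lies in the image of $s_{j,i}$, i.e.\ is divisible by $p^{k_i-k_j}$, so the division is legitimate. A short check shows that each entry is well defined in $\Z_{p^k}$ up to a multiple of $p^{k_j}$, an ambiguity that will turn out to be harmless. Commutativity is then a one-line generator chase — it suffices to test on each $e_j$ by $\Z$-linearity, and $\pa{A(\alpha)\,s(e_j)}_i = A(\alpha)_{i,j}\,p^{k-k_j} = p^{k_j-k_i}\alpha(e_j)_i\cdot p^{k-k_j} = p^{k-k_i}\alpha(e_j)_i = \pa{s(\alpha(e_j))}_i$, with the dangling multiple of $p^{k_j}$ in $A(\alpha)_{i,j}$ being multiplied by $p^{k-k_j}$ and hence vanishing modulo $p^k$.

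There is no genuinely hard step here; the only thing requiring care is the $p$-adic valuation bookkeeping in the paragraph above — tracking which ``multiply or divide by a power of $p$'' lands in, and is well defined on, which cyclic quotient, which is precisely the content compressed into the remark that $\alpha(e_j)_i$ lies in the image of $s_{j,i}$. Conceptually this is automatic: $\Z_{p^k}$ is self-injective, so $\Z_{p^k}^d$ is an injective $\Z_{p^k}$-module, and the $\Z_{p^k}$-linear map $s\circ\alpha\circ s^{-1}$ defined on the submodule $s(G)$ extends to an endomorphism of all of $\Z_{p^k}^d$ — which is exactly the assertion of the proposition, the explicit matrix above being one concrete (and not necessarily unique) such extension.
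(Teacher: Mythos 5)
Your proposal is correct and follows essentially the same route as the paper, which offers no separate proof but presents the proposition as a summary of the construction in the preceding paragraphs: the same choice of $k$, the same embedding $s$ assembled from the maps $s_{j,1}$, and the same matrix $A(\alpha)_{i,j}=p^{k_j-k_i}\alpha\pa{e_j}_i$ justified by the same divisibility observation, with the well-definedness modulo $p^{k_j}$ and the generator-by-generator commutativity check being details you usefully make explicit. One minor point in your favour: you take $k=\max_j k_j$ (the exponent of $G$), which is what the paper's worked example and the definition of $s_{i,j}$ require, whereas the paper's displayed chain of inequalities $k_d\geq\cdots\geq k_1=k$ is written the wrong way round.
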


This implies that to study the behaviour of CA on abelian groups, it is enough to study the case where these groups are of the form $\Z_{p^k}^d$.

\subsubsection{$R$-modules}
We will actually consider the more general case where $R$ is a finite commutative ring, and $\Sigma$ is a free $R$-module of dimension $d$, i.e. isomorphic to $R^d$.  The first reason for doing so is that it does not complicate the mathematics.  It will also appear more efficient to understand, for instance, $\F_{2^4}$ as a $1$-dimensional vector space over itself than as a $4$-dimensional vector space over $\F_2$: the former simply bears more information, and therefore implies more restrictions on the form of a CA, so that more can be deduced.

For any ring $B$, $B\cro{u,u^{-1}}$ denotes the ring of Laurent polynomials over $B$; it is the ring of linear combinations of integer powers (negative as well as nonnegative) of the unknown $u$.  Applying this to $B =\Hom_R\pa{\Sigma}$, we can associate to the function $f$ the Laurent polynomial $\tau(f)\in \Hom_R\pa{\Sigma}\cro{u,u^{-1}}$ defined by
\begin{equation}
  \label{eq:fourier}
  \tau(f)=\sum_{n\in\Z}f(\cdot)_n u^n.
\end{equation}

$\tau$ is an isomorphism of $R$-algebras between the linear cellular automata on the alphabet $\Sigma$ with internal composition rules $(+,\circ)$ and $\Hom_R\pa{\Sigma}\cro{u,u^{-1}}$, which can be identified with $\joliM_d\pa{R\cro{u,u^{-1}}}$ because $\Sigma\simeq R^d$; we are going to think and work in this former algebra, so from now on a linear cellular automaton $T=\tau(f)$ will be for us an element of $\joliM_d\pa{R\cro{u,u^{-1}}}$.

\subsection{Related work}\label{sec_related_work}
  Many papers have been published about the fractal structure of cellular automata spacetime diagrams.  We give here a short review and point out the differences to our approach.  When we mention $d$ and $k$ we are referring to $\joliM_d\pa{\Z_k[u,u^{-1}]}$.
  \begin{description}
    \item [\cite{Willson1987}] In this work, Willson considers the case $d=1$, $k=2$.  In order to determine the fractal dimension of the spacetime diagram, he analyses how blocks of length $n$ in the configuration of time step $t$ are mapped to such blocks in steps $2t$ and $2t+1$, a technique we also use in section~\ref{proof_sec}.

    \item [\cite{96229,selfsimilarity}] Takahashi generalizes Willson's work to the case $d=1$, with no restriction on the value of $k$.
    \item [\cite{Haeseler1993,Haeseler2001all}] Haeseler, Peitgen and Skordev study the fractal time evolution of CAs with special scaling properties, the weakest of them being ``weakly $p$-Fermat'', where $p$ is some integer, which includes the case $d=1$, $k=p$. Let us briefly introduce the $p$-Fermat property and show why the CAs that we study do not have to be $p$-Fermat. Let $\pi_p$ be the scaling map
    \begin{equation}
      \pi_p(\xi)_x=\left\{\begin{array}{ll}
                         \xi_y&\text{if}\,x=py\\
                         e&\text{otherwise} 
                       \end{array}\right. .
    \end{equation}   
    A CA $T$ is weakly $p$-Fermat if for all $s\in\Sigma$, $n\in\N$ and $x\in\Z$, $T^{np}(\bar s)_x=e\Leftrightarrow\pi_p T^n(\bar s)_x=e$.

    Let us now consider
    \begin{equation}
      \label{eq:cqca}
        \sca=\left(\begin{array}{cc}
          0&1\\
          1&u^{-1}+1+u
        \end{array}
        \right)\in\joliM_2(\Z_2[u,u^{-1}]).
    \end{equation}
    We will use this example throughout the paper. It generates the time evolution depicted in Figure \ref{fig:sca_fermat_sca}.
    \begin{figure}[htbp]
      \centering
      \subfloat[Time evolution of $\sca$ starting from $\xi=\tbinom{1}{0}$.]{\label{fig:sca_fermat_sca}\includegraphics[width=\textwidth/2-3mm]{theta_p2.pdf}}
      \hskip 3mm
      \subfloat[Time evolution of a general nearest neighbor $p$-Fermat CA.]{\label{fig:sca_fermat_fermat}\resizebox{\textwidth/2-3mm}{!}{\input{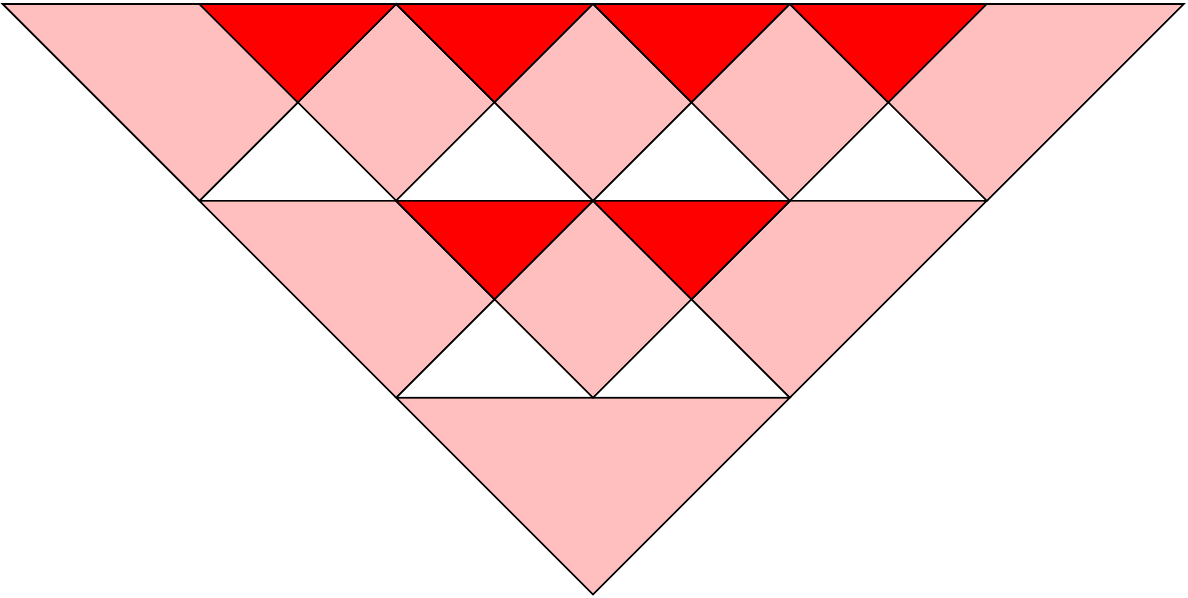_t}}}
      \caption{This figure shows that $\sca$ cannot be a $p$-Fermat CA. In a $p$-Fermat CA at least the white areas are filled by the neutral element $e$; $\sca$ has a different pattern.}
    \end{figure}
    A general nearest-neighbor $p$-Fermat CA produces a time evolution that reproduces itself after $p$ steps in at most three copies located at positions $\acco{-p;0;p}$. After $2p$ steps we have five copies at most.  This creates areas filled with the neutral element $e$ shared by all $p$-Fermat CAs for a fixed $p$.  In figures \ref{fig:sca_fermat_sca} and \ref{fig:sca_fermat_fermat} we can easily see that $\sca$ does not exibit these areas; therefore it is not $p$-Fermat. Furthermore $p$-Fermat CAs that are not periodic are irreversible, while we also allow reversible CAs, $\Theta$ being again one example. 

    \item [\cite{Allouche1996,Allouche1997}] Allouche, Haeseler, Peitgen and Skordev study recurrences in the spacetime diagram of linear cellular automata, from the angle of $k$-automatic sequences, which we will not define in this paper.  However they require $\Sigma$ to be an Abelian ring and the CA to be a ring homomorphism, which is again essentially the case $d=1$.

    \item [\cite{Moore1997,Moore1998}] Moore studies CAs with an alphabet $A$ on a staggered spacetime, where every cell $c$ is only influenced by two cells $a$ and $b$ of the last time step. The update rule is $c=a\bullet b$.  He requires $(A,\bullet)$ to be a quasigroup and studies different special cases. First let us note that these CAs are either irreversible or trivial, while ours don't have to be. Thus, although it is possible to bring our CAs in the form of staggered CAs, the results of Moore do not apply.  In his setting, our CAs would be of the form $c=a\bullet b=f(a)+g(b)$ for some homomorphisms $f$ and $g$. For $(A,\bullet)$ to be a quasigroup means

\begin{equation}
\forall a,b\in A\;\exists! x,y\in A\quad a\bullet x=b\wedge y\bullet a=b.
\end{equation}

In our case, these equalities translate respectively as $g(x)=b-f(a)$ and $f(y)=b-g(a)$. The right-hand sides can be arbitrary elements of $A$, therefore $f$ and $g$ have to be isomorphisms, as indeed required in~\cite{Moore1997}.
    
The angle of study of Moore is also different: he does not exactly study the fractal properties of the spacetime diagram, but rather the complexity of the prediction --- ``What will be the state of this cell after $t$ steps?''.  Describing the spacetime diagram with a matrix substitution system is one way of proving that prediction is an easy task --- for instance it makes it {\bf NC}.

    \item [\cite{macfarlane}] Macfarlane uses Willson's approach and generalizes parts of it to some examples of matrix-valued CAs, including $\sca$.  However, the transition matrix is obtained heuristically --- ``by scrutiny of figure~9'' --- from the spacetime diagram, instead of being algorithmically derived from the transition rule (as in the present work).  
The conclusion (section 6) suggests that the analysis of $\Theta$ is easily generalizable to matrices of various sizes over various rings, so in a sense the present article is but an elaboration of the concluding remark of~\cite{macfarlane}, although we have to say we do not find this generalization to be that obvious. 
\end{description}

The heart of our proof is in section~\ref{section_recursion}.  In a nutshell, whereas most of the techniques used in our article can be traced back to older articles, the new one that allows us to extend the analysis to a larger class of automata is the introduction of $\alpha$ in Equation~(\ref{penultimate_decomposition}).  The idea in doing so is to get rid of the complicated noncommutative ring structure and go back to a simple linear recurrence, as stated in Proposition~\ref{introduce_Xi}.  Since a linear recurrence is precisely where the analysis started from, it could seem at first sight that nothing is gained in the process, but the new recurrence actually does not define a cellular automaton.  Instead of defining line $n+1$ from line $n$, it cuts right through to line $mn$, thus establishing a scaling property.

\subsection{Different CAs}
\label{section_different_ca}
  While we use $\sca$, which has very special properties (being reversible, over a field of characteristic two, and described by a $2\times 2$ matrix), as our example throughout the paper, the analysis applies of course to all other linear CAs. In this section we give a short overview over the variety of spacetime diagrams these CAs generate. Let us start with small changes to $\sca$. For our first example we keep $m=k=2$, but change the determinant to $u$. We only change one entry of the matrix:
  \begin{equation}
    \label{eq:non_cqca}
        \sca_{u}=\left(\begin{array}{cc}
          0&u\\
          1&u^{-1}+1+u
        \end{array}
        \right).
  \end{equation}

  The spacetime diagram is displayed in figure~\ref{fig:non_clifford} and shows how much difference a small change in the update rule can make for the spacetime diagram.

  \begin{figure}[htbp]
    \centering
    \subfloat[Time evolution of $\sca_{u}$.]{\label{fig:non_clifford}\includegraphics[width=\textwidth/2-3mm]{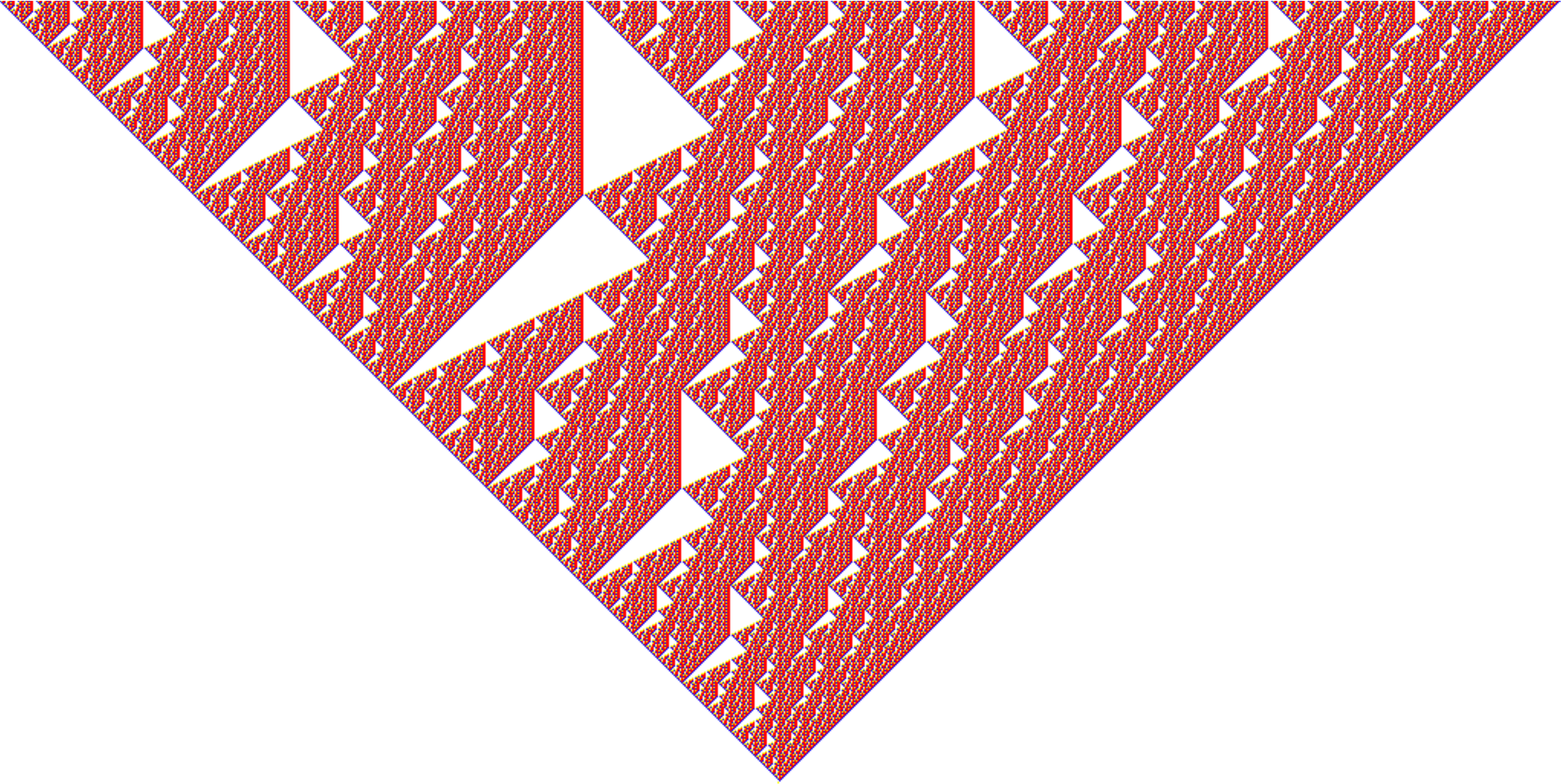}}
    \hskip 3mm 
    \subfloat[Time evolution of $T_{\F_4}$.]{\label{fig:tf4}\includegraphics[width=\textwidth/2-3mm]{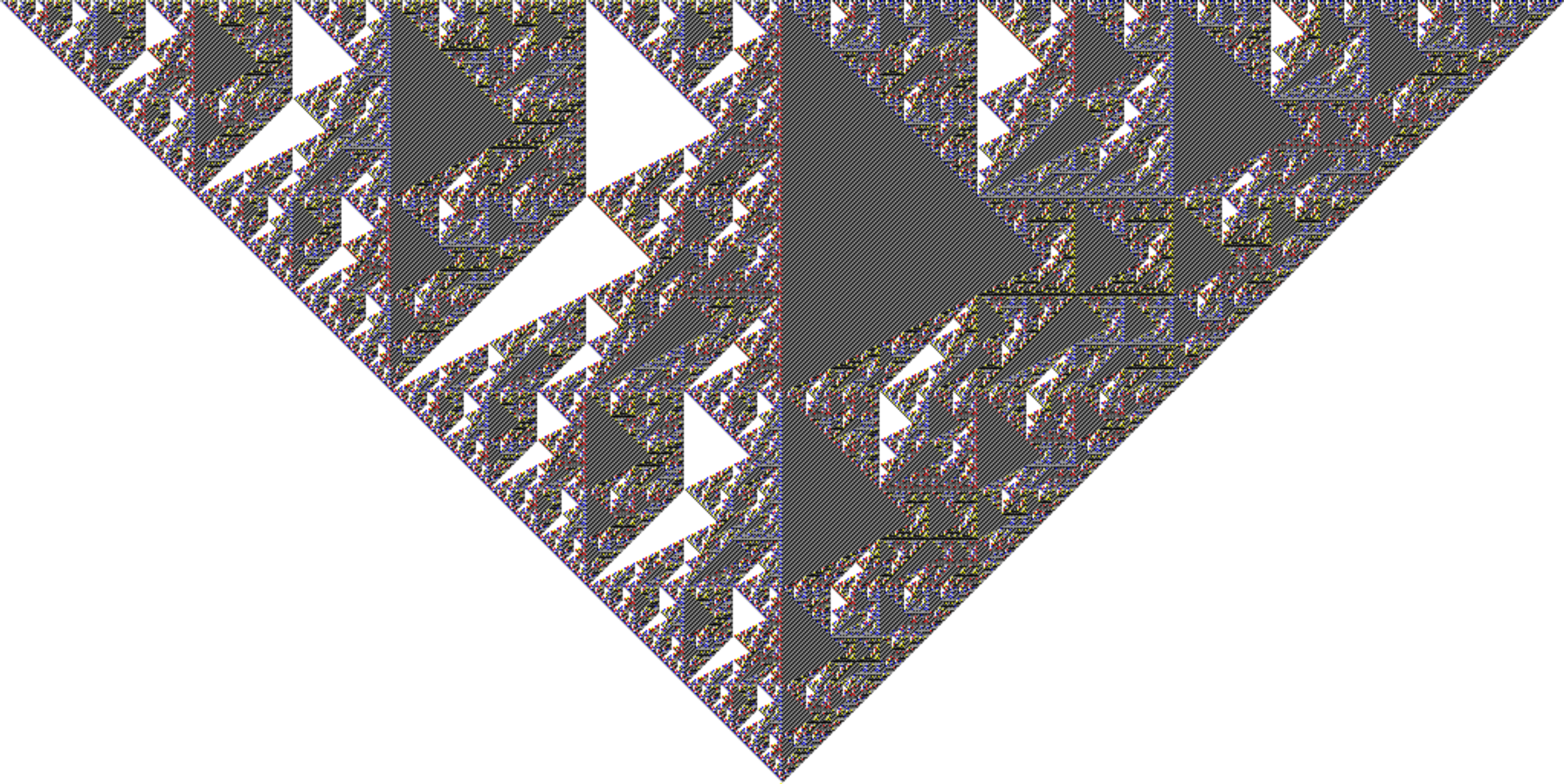}}
    \caption{Spacetime diagrams of non-clifford CAs.}
  \end{figure}

  
Let us now modify $\Theta$ in a more subtle fashion:

  \begin{equation}
    \label{eq:k=4}
        \sca_{k=4}=\left(\begin{array}{cc}
          0&1\\
          1&u^{-1}+1+u
        \end{array}
        \right).
  \end{equation}

The hidden difference with $\sca$ is the underlying ring, which has now been extended from $\Z_2$ to $\Z_4$.  $\sca_{k=4}$ contains in some sense more information than $\Theta$, since $\Theta$ is induced from $\sca_{k=4}$ by the projection $\Z_4\twoheadrightarrow\Z_2$.  Consequently, 
the spacetime diagram of $\sca$ is nothing but a projection of that of $\sca_{k=4}$, as illustrated in figure~\ref{fig:theta_p4}.

  The last CA we want to present lies in $\joliM_2\pa{\F_4[u,u^{-1}]}$, where $\F_4$ is the finite field of order $4$, here identified with $\F_2[\omega]/(\omega^2+\omega+1)$.
  The corresponding matrix is
  \begin{equation}
    \label{eq:f4_ca}
    T_{\F_4}=\left(\begin{array}{cc}
          0&\omega\\
          u^{-1}&(\omega +1)u^{-1}+\omega+u
    \end{array}
    \right).
  \end{equation}
  If one wants to avoid calculations in $\F_4$, this CA can be translated to a CA in $\joliM_4\pa{\Z_2[u,u^{-1}]}$, namely
  $$
    \tilde T_{\F_4}={\left(\begin{array}{cccc} 
          0&0&0&1\\
          0&0&1&1\\
          u^{-1}&0&u^{-1}+u&u^{-1}+1\\
          0&u^{-1}&u^{-1}+1&1+u
    \end{array}
    \right)}.
$$

  Its spacetime diagram, as can be seen shown in figure~\ref{fig:tf4},
contains patches of checkerboard pattern. Somehow, they trivialise most of the usual properties of the figure: for instance they make its fractal dimension $2$, even if the fractal structure can hardly be considered trivial.  In order to access more interesting properties, 
it is possible to blank this pattern out, considering it as just ``another shade of white''.  This can be trivially done on the matrix substitution system, by removing the states from which the blank state is inaccessible.

\subsection{Coloured spacetime diagrams} \label{section_diagram}
The mainstream setting when studying the fractal structure of spacetime diagrams is monochromatic; we introduce colors in the picture.

Instead of considering simple compact subsets of the plane, we will have a finite set of colors $\joliC$ and compact subsets of $\pa{\R^2}^\joliC$.  Let $b\not\in\joliC$ be the additional ``blank'' color and $\mochec:\Sigma\to \joliC\cup\acco{b}$ a coloring of $\Sigma$ such that $\mochec(0)=b$.  To determine a \emph{colored spacetime diagram}, we need furthermore to be given an automaton $T\in\joliM_d\pa{R\cro{u,u^{-1}}}$, an initial state $\xi\in R^d$, and an integer $n$.   The corresponding colored spacetime diagram is then the rescaled diagram obtained by iteratively applying $T$ $n$ times on $\xi$.

Formally, for $n,i,j\in\N$, let $S_{n,i,j}$ be the full square centred in $\frac{1}{n}\pa{i,j}$ and whose edges, parallel to the axes, are of length $\frac{1}{n}$.
To each positive integer $n$ and color $c\in\joliC$ is associated a compact subset of the plane $\joliP_n(c)$ which is the union of the $S_{n,i,j}$'s such that $0\leq j \leq n$ and $\mochec\pa{T^j\pa{\xi}_i}=c$.
The colored spacetime diagram of order $n$ is then the function $\joliP_n:c\mapsto\joliP_n(c)$. A sequence of colored patterns $\pa{\joliP_n}_{n\in\N}$ of spacetime diagrams is said to converge to some colored pattern $\joliP_\infty$ if for every $c\in\joliC$, $\pa{\joliP_n(c)}_{n\in\N}$ converges to $\joliP_\infty(c)$ for the Hausdorff distance.

We can now state our main result.

\begin{thm}\label{bigthm}
Let $G$ be a finite abelian $p$-group.  For every cellular automaton over $G$ that is also a group homomorphism, there exists a positive integer $m$ such that for every fixed initial state the coloured spacetime diagrams of order ${p}^{mn}$ converge when $n$ goes to infinity.
\end{thm}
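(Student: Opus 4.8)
The plan is to reduce everything to the case $G = \Z_{p^k}^d$ via the proposition of section~\ref{section_groups}, so that the automaton becomes a matrix $T \in \joliM_d(\Z_{p^k}[u,u^{-1}])$ acting on configurations by the polynomial formalism of Equation~(\ref{eq:fourier}). The coloured spacetime diagram of order $p^{mn}$ is, up to rescaling by $p^{mn}$, the array of colours $\mochec(T^j(\xi)_i)$ for $0 \le j \le p^{mn}$. To prove Hausdorff convergence of the associated compact sets, the standard strategy --- going back to Willson \cite{Willson1987} and used by Takahashi and by Haeseler--Peitgen--Skordev --- is to exhibit the diagram as the attractor of a \emph{matrix substitution system}: one fixes a block size $p^m$, groups the configuration into blocks of $p^m$ consecutive cells, and shows that the content of a block at times $[p^m t, p^m(t+1))$ is determined by finitely many neighbouring blocks at times $[t, t+1)$ through a fixed finite set of substitution rules. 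If such a system exists, its diagram is the unique fixed point of a contraction on the space of compact subsets of $(\R^2)^{\joliC}$, and convergence of $\joliP_{p^{mn}}$ follows from the Banach fixed point theorem / standard IFS theory.

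The technical core --- corresponding to the ``heart of the proof'' the authors announce in section~\ref{sec_related_work} --- is establishing the scaling recursion. First I would find, for a suitable $m$, a clean expression for $T^{p^m}$ as a Laurent polynomial in $u^{p^m}$ with matrix coefficients: concretely, one wants $T^{p^m} = \sum_{\ell} C_\ell \, u^{\ell p^m}$ where the sum is over a \emph{bounded} range of $\ell$ independent of $n$, and where the matrix coefficients $C_\ell$ lie in a \emph{finite} set. Over a ring of characteristic dividing $p^k$, the key algebraic input is a Frobenius-type identity: raising to the $p$-th power interacts well with the grading in $u$, and iterating $k$ times (or $m$ suitably chosen in terms of $k$ and $d$) kills the cross terms and the ``carries'' coming from $\Z_{p^k}$ rather than $\Z_p$. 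This is where one pays for working over $\Z_{p^k}$ instead of $\Z_p$ and for allowing $d > 1$ and noncommuting matrix coefficients; the authors flag it as the nonobvious generalization of \cite{macfarlane}, realized by introducing an auxiliary map $\alpha$ (their Equation~(\ref{penultimate_decomposition})) that linearizes the recurrence over a larger but still finite index set, as in their Proposition~\ref{introduce_Xi}.

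Granting the scaling recursion $T^{p^m} = \sum_\ell C_\ell u^{\ell p^m}$ with $\ell$ ranging over a fixed finite window $W$ and $C_\ell$ in a fixed finite set, the block structure falls out: write any configuration as a sequence of blocks $B_t \in (\Z_{p^k}^d)^{p^m}$ indexed by $t \in \Z$; then $T^{p^m}$ sends the block sequence $(B_t)$ to $(\sum_{\ell \in W} C_\ell \cdot B_{t-\ell})$, which depends only on $|W|$ consecutive old blocks and on which $C_\ell$ occur. The \emph{states} of the substitution system are the finitely many vectors of $|W|$ blocks that actually arise from the iterated dynamics starting from $\bar\xi$ (equivalently, the vertices reachable in the directed graph whose transitions encode ``which tuple of sub-blocks a given block-tuple produces after one application of $T^{p^m}$''); finiteness here is exactly because blocks live in the finite set $(\Z_{p^k}^d)^{p^m}$. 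Passing the colouring $\mochec$ through, and using $\mochec(0) = b$ so the blank regions are consistent across scales, each state determines a $p^m \times p^m$ coloured tile built from the $p^m$ states reachable from it, i.e.\ a genuine matrix (multi-)substitution. The diagram of order $p^{m(n+1)}$ is obtained from that of order $p^{mn}$ by applying this substitution once and rescaling by $p^{-m}$; since that operation is a strict contraction (factor $p^{-m}$) on the complete metric space of compact subsets of $(\R^2)^{\joliC}$ under Hausdorff distance, the sequence $(\joliP_{p^{mn}})_n$ is Cauchy and converges to the attractor $\joliP_\infty$. Finally one checks the reduction is legitimate: the embedding $s\colon G \hookrightarrow \Z_{p^k}^d$ from the proposition intertwines the dynamics, and for a fixed initial state in $G$ the coloured diagrams over $G$ and over $\Z_{p^k}^d$ differ only by relabelling colours through $\mochec$, so convergence transfers back. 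The main obstacle, to repeat, is purely algebraic: proving that a single $m$ works --- that $T^{p^m}$ really does become a polynomial in $u^{p^m}$ with coefficients drawn from a finite set --- uniformly in the number of iterations, in the presence of the $\Z_{p^k}$ carries and the noncommutativity of $\joliM_d$.
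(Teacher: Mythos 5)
Your overall architecture (reduce to $\Z_{p^k}^d$, pass to $\joliM_d(R[u,u^{-1}])$, derive a scaling recursion, package it as a matrix substitution system, conclude Hausdorff convergence) matches the paper, but the step you yourself identify as the technical core is not just unproven --- it is false for the automata this theorem is about. You want $T^{p^{m}}$ (uniformly over iterates, i.e.\ $T^{p^{mn}}$ for all $n$) to be a Laurent polynomial in $u^{p^{mn}}$ with matrix coefficients drawn from a fixed finite set over a fixed window. That is exactly the (weakly) $p$-Fermat scaling property, and section~\ref{sec_related_work} of the paper is at pains to show that its running example $\sca$ does \emph{not} have it: already $\sca^2=\bigl(\begin{smallmatrix}1 & u^{-1}+1+u\\ u^{-1}+1+u & u^{-2}+u^2\end{smallmatrix}\bigr)$ has odd-degree terms, and more generally $T^{2^n}=\widetilde{T}^{2^n}+(\tr T)^{2^n}\I$ contains the full light-cone of the dual automaton, so the support of $T^{p^{mn}}$ is not confined to multiples of $p^{mn}$. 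What Cayley--Hamilton plus Frobenius actually yields is much weaker: $T^{k^n m'}=\sum_{j<m'}\sum_{i}\mu_{i,j}\,u^{k^n i}\,T^{k^n j}$, an expression of one high power in terms of \emph{shifted lower powers} $T^{k^n j}$, each of which is itself spread over a support of width proportional to $k^n j$. Consequently your block construction does not close up: the content of a $p^m$-cell block of the configuration at time $p^m t$ is not a function of boundedly many blocks of the configuration at time $t$, so the ``finitely many states'' you propose (tuples of raw cell values) do not form a substitution system.

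The missing idea --- the one the paper flags as its actual contribution --- is to change what the substitution state records. Instead of cell contents, one tracks the coefficients $\alpha_{j}(x-i,y)$ of the decomposition $\Xi_x^y=\sum_{i}\sum_{j<m}\alpha_j(x-i,y)\,\Xi_i^j$ onto the first $m$ rows (Equation~(\ref{penultimate_decomposition})); these coefficients, grouped over a carefully padded finite window $\joliS'$, \emph{are} scale-covariant (Proposition~\ref{introduce_Xi}: $e(kx+s,ky+t)$ is a function of $s,t,e(x,y)$), even though the cell values are not. The cell value is then recovered as a function of the state via the fixed initial data $\Xi_i^j$, $j<m$. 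You cite this device second-hand from the paper's own summary, but your argument never uses it; as written, the proof rests on the false Fermat-type identity. Separately, your appeal to ``contraction, hence Banach'' for convergence of the \emph{coloured} diagrams is too quick: a letter can reach a given colour only along paths of certain residues of lengths, so one needs the aperiodicity of the strongly connected components of the substitution graph (the paper's Propositions~\ref{prop_aperiodic} and~\ref{prop_existsk}), and passing to a power of the graph is part of where the integer $m$ in the statement comes from.
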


In general, to know about the fractal structure of a cellular automaton over some finite group $G$, write $G$ as a product of $p$-groups and study each $p$-component of the spacetime diagram independently; according to Theorem~\ref{bigthm}, each component generates a fractal pattern. Then, since the logarithms of the prime numbers are rationally independent, it is possible to find a sequence of resized spacetime diagrams that converges towards a superposition of these different components with arbitrary independent rescaling coefficients, but there is no direct generalization of the theorem.  For instance, even in the simple case of Pascal's triangle modulo 6, there is no real number $\alpha>0$ such that the diagrams of order $\floor{\alpha^n}$ converge; however those of order $t_n$ will converge as soon as the fractional parts of $\log_3(t_n)$ and $\log_2(t_n)$ both converge, and then their limits determine the limit pattern.  The situation is very briefly described in section~5 of \cite{selfsimilarity}.

\subsubsection{Matrix substitution systems}
We will show how to find a suitable description of the limit pattern in the rest of this article.  We now explain exactly what it means to generate a colored picture by rules of substitution, and how to take the limit of all these pictures.  This is a generalization of the usual monochromatic description that can be found for instance in~\cite{mandelbrot,Willson1987,Haeseler1993}, and which corresponds in our setting to the case where all the colors are mapped to ``black''.

Let $V$ be a finite alphabet; because we want colors, compared with the usual definition of a matrix substitution system, we don't have to include a special ``empty'' letter. A matrix substitution system is then a function $\joliD:V \to V^{{\doublecro{1;r}}^2}$; for some integer $r$.  Together with a set of colors $\joliC$ and a coloring $\mochec:V\to\joliC$, it defines colored patterns, much in the same way cellular automata do.  With the previous notations, at each step $n$, the pattern $\joliP_n$ is the union of squares $S_{r^n,i,j}$ of different colors, for different $i$'s and $j$'s; each one of them is indexed by some letter in $V$. 

Then at step $n+1$, each colored square of color $c$ indexed by $v\in V$ present in the $n$th step pattern is replaced by $r^2$ smaller squares that pave it; these smaller squares are given by $\joliD(v)$ and indexed accordingly.  To such a matrix substitution system we can associate a multigraph $\Gamma=(V,E)$  where the set of vertices is $V$ and we put as many edges from $v$ to $w$ as there are $w$'s in $\joliD(v)$.

A \emph{plain} matrix substitution system is one of the usual kind: no coloring, and $V$ contains a special letter $\varepsilon$ such that $\joliD\pa{\varepsilon}$ is a matrix full of $\varepsilon$'s and $\mochec(\varepsilon)=b$.  In the multigraph associated to a plain matrix substitution system, $\varepsilon$ is excluded from the set of vertices.

We want to generalize the usually property of convergence of the patterns defined by plain matrix substitution systems.  This will be done by the conjunction of the two following propositions.  Let us first remind some notions on graphs: the \emph{period} of a graph is the greatest common divisor of the lengths of all the cycles in $\Gamma$; a graph is \emph{aperiodic} if it has period 1. 

\begin{prop}
\label{prop_aperiodic}
If every strongly connected component of $\Gamma$ is aperiodic, then $\pa{\joliP_n}_{n\in\N}$ converges.
\end{prop}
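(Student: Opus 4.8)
The plan is to show that for each color $c$, the sequence of compact sets $\pa{\joliP_n(c)}_{n\in\N}$ is Cauchy for the Hausdorff distance, so that it converges by completeness of the space of compact subsets of $\cro{0;1}^2$ (or of a suitable bounded region, after normalizing). The natural comparison is between $\joliP_{r^n}(c)$ and $\joliP_{r^{n+1}}(c)$: by the definition of the substitution system, the finer pattern at level $n+1$ is obtained by subdividing each square $S_{r^n,i,j}$ indexed by some letter $v$ into $r^2$ subsquares according to $\joliD(v)$, and recoloring. So the only way $\joliP_{r^{n+1}}(c)$ can differ from $\joliP_{r^n}(c)$ inside a given level-$n$ square is governed by which letters are reachable from $v$ in the multigraph $\Gamma$. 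First I would make this precise: a point lies in $\joliP_{r^{n+k}}(c)$ iff, following the subdivision $k$ times, the letter we land in has color $c$; hence the discrepancy between $\joliP_{r^n}(c)$ and $\joliP_{r^{n+k}}(c)$ is confined to level-$n$ squares whose letter $v$ admits two length-$k$ walks in $\Gamma$ ending in letters of different colors.

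The key combinatorial input is the hypothesis that every strongly connected component of $\Gamma$ is aperiodic. From any letter $v$, a length-$k$ walk eventually enters some terminal strongly connected component (one with no outgoing edges to other components); once inside such a component $\joliS$, aperiodicity plus strong connectedness gives the standard fact that for $k$ large enough (uniformly in the finite graph), there is a walk of length exactly $k$ between any two prescribed vertices of $\joliS$. The subtlety is that a walk from $v$ may reach \emph{several} distinct terminal components, and within a fixed terminal component the letters may carry different colors, so the ``asymptotic color'' of a square is not literally well-defined — but what \emph{is} controlled is the following: the set of colors appearing at distance $k$ from $v$ stabilizes as $k\to\infty$ (for $k$ past the transient length plus the aperiodicity threshold, it equals the union, over terminal components reachable from $v$, of the colors present in that component). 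This is the step I expect to be the main obstacle: formulating the right stabilization statement, because convergence here is \emph{not} pointwise stabilization of the coloring — a fixed square keeps changing color forever — but rather a statement about the Hausdorff limit, so one must argue that although colors keep refreshing inside a square, the set of subsquares of a given color becomes, up to vanishing Hausdorff error, a fixed self-similar set inside each level-$n$ cell.

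Concretely, I would argue as follows. Fix a color $c$ and $\varepsilon>0$; choose $n$ with $r^{-n}<\varepsilon$. For $k\ge k_0$ (the transient-plus-aperiodicity bound), and for every letter $v$, the pattern of color $c$ inside a level-$n$ square indexed by $v$, as computed at level $n+k$, changes only within a $r^{-n}$-neighborhood when $k$ is increased: indeed, whether the deep subdivision produces color $c$ in a given sub-cell is determined by walks in $\Gamma$, and for $k\ge k_0$ adding more steps cannot introduce color $c$ into a region that was color-$c$-free for all large $k$, nor remove it, \emph{up to} cells adjacent to the boundary between reachable components' domains — and all such cells lie in a bounded number of level-$(n+1)$ strips, hence within Hausdorff distance $r^{-n}<\varepsilon$. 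Therefore $d_H\pa{\joliP_{r^{n+k}}(c),\joliP_{r^{n+k'}}(c)}<2\varepsilon$ for all $k,k'\ge k_0$, i.e. the sequence indexed by powers of $r$ is Cauchy; for general $n$, write $n$ between consecutive powers of $r$ and note the patterns at intermediate scales differ from the bracketing powers by at most $O(r^{-\log_r n})\to 0$. This yields convergence of $\pa{\joliP_n(c)}_{n}$, and since there are finitely many colors, of $\pa{\joliP_n}_{n}$. The one genuinely delicate point to nail down rigorously is the uniform-in-$v$ claim that the color-$c$ region inside each cell stabilizes up to $O(r^{-n})$ Hausdorff error; I would isolate it as a lemma about walks in a finite multigraph all of whose SCCs are aperiodic, proved by the usual Perron–Frobenius / primitivity argument on each terminal component together with a bound on transient length.
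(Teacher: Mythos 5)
Your overall architecture --- prove the sequence is Cauchy for the Hausdorff distance by showing that, beyond a uniform threshold, the set of colors reachable from a letter $v$ by a walk of length exactly $k$ in $\Gamma$ no longer depends on $k$ --- is sound, and the lemma you isolate as the crux is indeed the heart of the matter: it is precisely the paper's observation that for each letter $v$ and color $c$, the set $X_c(v)$ of lengths of walks from $v$ to a letter of color $c$ is either finite or cofinite. The paper then takes a different route to the conclusion: rather than a direct Cauchy estimate, it erases the letters $v$ with $X_c(v)$ finite to obtain a plain monochromatic substitution system $\joliD^c$ whose convergence is classical, and checks that $\joliP_{n+M}(c)$ and the $n$-th pattern of $\joliD^c$ are within vanishing Hausdorff distance. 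Your direct estimate is equally workable once the lemma is in place, and can be stated more simply than you do: for $k,k'\geq M$, the color-$c$ parts of any level-$n$ cell at depths $n+k$ and $n+k'$ are either both empty or both non-empty subsets of that cell, hence at Hausdorff distance at most $\sqrt{2}\,r^{-n}$, which gives the Cauchy property without any discussion of ``boundary strips''.

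The genuine gap is in your justification of the lemma via \emph{terminal} strongly connected components. It is not true that a long walk from $v$ must end inside a terminal component, and consequently the stabilized color set is \emph{not} the union of the colors present in the terminal components reachable from $v$. Counterexample: let $v$ have color $c$, a loop $v\to v$, and an edge into a terminal aperiodic component none of whose letters has color $c$. Then $c$ appears at distance $k$ from $v$ for every $k$ (iterate the loop), yet $c$ occurs in no terminal component; your recipe would wrongly erase $c$ from the limit inside every cell labelled $v$ and produce the wrong compact. The correct statement is: $X_c(v)$ is infinite if and only if there is a walk $v\to u\to w$ where $u$ lies on some cycle and $\mochec(w)=c$; in that case, since the strongly connected component of $u$ is aperiodic, the lengths of closed walks at $u$ form a cofinite subset of $\N$ (the usual numerical-semigroup argument, using that the gcd of closed-walk lengths at $u$ equals the period of its component), whence $X_c(v)$ is cofinite. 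This yields the finite/cofinite dichotomy with a bound $M$ uniform over the finitely many pairs $(v,c)$, after which your Cauchy argument goes through. A small side remark: the sequence $\pa{\joliP_n}$ in the proposition is already indexed by substitution steps, each at scale $r^{-n}$, so the final interpolation between ``consecutive powers of $r$'' addresses a non-issue.
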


\begin{proof}
To each color $c\in\joliC$ we associate the plain matrix substitution system $\joliD^c$, obtained from $\joliD$ simply by turning some letters into $\varepsilon$. For $v\in V$, let $X_{c}(v)$ be the set of integers $n$ such that there exists a path of length $n$ in $\Gamma$ connecting $v$ to a letter of the color $c$. Since the strongly connected component containing $v$ is aperiodic, $X_{c}(v)$ is either finite or cofinite. Those letters $v\in V$ such that $X_{c}(v)$ is finite are sent to $\varepsilon$, and this defines $\joliD^c$.  If $X_{c}(v)$ is finite and $v'$ can be reached from $v$, then $X_{c}(v')$ is also finite;  therefore, $\joliD^c$ is indeed a substitution system. Let $M$ be such that for every $v\in V$, either $X_{c}(v)$ or its complement is strictly bounded by $M$.

Let us now compare two sequences of figures. The first one is $\pa{\joliP_n(c)}$, the subpattern of color $c$ defined by $\joliD$. The second one is $\pa{\joliP^c_n}$, the one obtained from $\joliD^c$; we know that it converges to some compact $\joliD^c_{\infty}$.
By construction, $\joliP_{n+M}(c)$ is included in $\joliP^c_n$, and for every black square of $\joliP^c_n$, there is a black subsquare in $\joliP_{n+M}(c)$. The Hausdorff distance between $\joliP_{n+M}(c)$ and $\joliP^c_n$ therefore converges to 0, so that $\pa{\joliP_n(c)}$ converges to $\joliD^c_{\infty}$.
\end{proof}

For a graph $\Gamma$, let $\Gamma^k=(V,E^k)$ where $E^k$ is the set of couples $(v,w)$ such that there exists in $\Gamma$ a path of length $k$ from $v$ to $w$.

\begin{prop}\label{prop_existsk}
For every (multi)graph $\Gamma$, there exists $k$ such that every strongly connected component of $\Gamma^k$ is aperiodic.
\end{prop}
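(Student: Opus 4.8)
The plan is to reduce the statement to a purely number-theoretic fact about cycle lengths in a finite graph. First I would recall that the strongly connected components of $\Gamma$ partition the vertices into pieces on which the cyclic structure is governed by a single integer, the period $q$ of that component, which is the gcd of all cycle lengths through any of its vertices. A classical fact (used in the theory of primitive nonnegative matrices) is that within a strongly connected component of period $q$, the vertex set splits into $q$ cyclic classes $C_0,\dots,C_{q-1}$ such that every edge from $C_i$ goes to $C_{i+1 \bmod q}$, and moreover for $\ell$ large enough there is a path of every length $\ell$ with $\ell\equiv j-i \pmod q$ from any vertex of $C_i$ to any vertex of $C_j$. In particular all closed walks have length divisible by $q$, and all sufficiently long multiples of $q$ occur as closed-walk lengths.

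Next I would analyse what passing to $\Gamma^k$ does to a single strongly connected component $S$ of period $q$. A closed walk of length $\ell$ in $\Gamma^k$ based at $v\in S$ corresponds to a closed walk of length $k\ell$ in $\Gamma$, hence $q \mid k\ell$. Conversely, using the cyclic-class description, for $\ell$ large enough every $\ell$ with $q\mid k\ell$ arises. So the period of the component of $\Gamma^k$ containing $v$ is $q/\gcd(q,k)$ — except one must be slightly careful that $S$ may break up into several strongly connected components in $\Gamma^k$ (precisely $\gcd(q,k)$ of them, one per union of cyclic classes sharing a residue mod $\gcd(q,k)$), and each of those has period $q/\gcd(q,k)$. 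Thus choosing $k$ to be a multiple of $q$ makes every such new component aperiodic.

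Finally I would make the choice of $k$ uniform over all components. Let $q_1,\dots,q_s$ be the periods of the (finitely many) strongly connected components of $\Gamma$, and set $k=\operatorname{lcm}(q_1,\dots,q_s)$ (or simply $k=q_1q_2\cdots q_s$, or even $|V|!$, since each $q_i\le|V|$). Then for each original component of period $q_i$, we have $q_i \mid k$, so $\gcd(q_i,k)=q_i$ and every strongly connected component of $\Gamma^k$ obtained from it has period $q_i/q_i=1$. Since every strongly connected component of $\Gamma^k$ lies inside some strongly connected component of $\Gamma$ (edges of $\Gamma^k$ only connect vertices connected in $\Gamma$), this covers all of them, and $\Gamma^k$ is aperiodic as required.

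The main obstacle, and the place where care is needed rather than cleverness, is the bookkeeping around the splitting of a component into several components in $\Gamma^k$: one must verify both that the new pieces are genuinely strongly connected (which needs the "all sufficiently long admissible lengths occur" half of the cyclic-class theorem, not just the divisibility half) and that no spurious longer cycles survive to reintroduce a common factor. Everything else is an invocation of the standard structure theory of the period of a strongly connected digraph, which I would either cite or prove in a short lemma beforehand.
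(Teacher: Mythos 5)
Your argument is correct, but it takes a different route from the paper's. You prove the quantitative statement that a strongly connected component of period $q$ splits, in $\Gamma^k$, into exactly $\gcd(q,k)$ strongly connected pieces each of period $q/\gcd(q,k)$, so that a single pass with $k=\operatorname{lcm}$ of the periods already works; this requires the full cyclic-class structure theorem, in particular the half asserting that every sufficiently long length congruent to $j-i \pmod q$ is realized by a path from $C_i$ to $C_j$ (needed, as you note, both to certify that the new pieces are strongly connected and to compute their periods). The paper avoids invoking that theorem and pays for it with a weaker invariant: after raising to the power $k_0=\operatorname{lcm}$ of the periods, it only knows that the gcd of cycle lengths \emph{within each old component} is $1$, which does not yet force each \emph{new, smaller} strongly connected component to be aperiodic (distinct new components could carry cycles of coprime lengths while each is individually periodic). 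It therefore iterates the construction, obtaining $\Gamma^{k_0 k_1 \cdots}$, and argues termination from the fact that strongly connected components can only shrink; at the fixed point no component breaks apart, which forces all periods to be $1$. Your proof buys an explicit, one-shot value of $k$ (and an exact count of the resulting components) at the cost of citing or proving the cyclic-class lemma; the paper's buys a shorter, more elementary per-step argument at the cost of an unspecified number of iterations and a less explicit final exponent. Both are complete; just make sure, if you keep your version, that the cyclic-class theorem is actually stated and proved or properly cited, since it is carrying essentially all the weight.
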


\begin{proof}
Each strongly connected component $\Delta$ of $\Gamma$ has a period $p(\Delta)$, so that $\Delta^{p\pa{\Delta}}$ is aperiodic. Let $k_0$ be the least common divisor of the $p(\Delta)$'s; then each strongly connected component of $\Gamma$ induces an aperiodic graph in $\Gamma^{k_0}$, but it is possible that, in the process, it broke down into several connected components, so that $\Gamma^{k_0}$ might not have the required property. The procedure then has to be repeated from $\Gamma^{k_0}$ to obtain $\Gamma^{k_0 k_1}$, and so on. Since the strongly connected components of $\Gamma^{k_0 \cdots k_{i+1}}$ are included in those of $\Gamma^{k_0 \cdots k_{i}}$, this process reaches a fixed point, which is a graph with the required property.
\end{proof}

Ergo, a colored matrix substitution system defines a convergent colored pattern when considering the steps that are a multiple of some well-chosen integer $m$.  So, in order to prove Theorem~\ref{bigthm}, all we need to do is find such a substitution system.  This will be done
in a special case in the next section, and in the general case in section~\ref{proof_sec}.

\section{A special recursion scheme for $\sca$}\label{section_special_scheme}
The aim of this section is to give the most direct and natural explanation of the fractal structure generated by $\sca$ that we are aware of.  Modulo some caveat, it applies effortlessly to all invertible elements $T$ of $\joliM_2\pa{R[u,u^{-1}]}$, where $R$ is a finite Abelian ring of characteristic $2$.  This section is not vital to the proof of the general case presented in~\ref{proof_sec}, and can therefore be skipped by the impatient reader.

We will deduce informally the basic structure of the spacetime diagrams from a simple recursion relation for the $2^n$th powers of $T$. The characteristic polynomial of $T$, $P_T(X)$, is equal to $X^2 + \pa{\tr T} + \det T$. According to Cayley-Hamilton theorem, $P_T(T)=0$, so $T^2 + \pa{\tr T} T + \pa{\det T}\I =0$. Multiplying this equation by $T^{-1}$, we get $T=\pa{\det T}T^{-1}+\pa{\tr T}\I$. Let us denote $\widetilde{T}=\pa{\det T} T^{-1}$, which we will name the \emph{dual} of $T$; since we are in characteristic $2$, by repeatedly taking the square of this equality, we obtain
\begin{equation}
\label{eq:t_to_2_to_n}
\forall n\in\N\quad T^{2^n}=\widetilde{T}^{2^n}+\pa{\tr T}^{2^n}\I. 
\end{equation}

Taking the trace of this equation, we get $\tr T^{2^n}=\tr \widetilde{T}^{2^n}$; in particular, $\tr T=\tr \widetilde{T}$ so Equation~(\ref{eq:t_to_2_to_n}) is also valid when swapping $T$ and $\widetilde{T}$. Let $I_T$ be a finite set, and the $\lambda_i$'s elements of $R$ such that $\tr T = \sum_{i\in I_T}\lambda_i u^i$. Then we have
\begin{equation}
\label{eq:trace_to_2_to_n}
\forall n\in\N\quad \pa{\tr T}^{2^n} = \sum_{i\in I_T}\pa{\lambda_i}^{2^n}u^{2^n i}.
\end{equation}

We do not yet specify the initial state; as a matter of fact, it will prove to be largely irrelevant. The only thing we ask for now is that it is nontrivial (and finite).

Consider for instance $\sca$; we have $\det \sca=1$ and $\lambda_{i}=\chi_{\acco{-1;0;1}}(i)$. We start with the spacetime diagram corresponding to $2^n$ steps; it is rescaled to a triangle with vertex coordinates $\{(0,0),(-1,1),(1,1)\}$. Taking equations (\ref{eq:t_to_2_to_n}) and (\ref{eq:trace_to_2_to_n}), we can see that the state at the $2^n$th time step can be decomposed into a sum of several copies of the initial state (the positions are governed by the coefficients of the trace) and a configuration that can be derived by applying $\widetilde{T}^{2^n}$ to the initial state. In the next $2^n$ steps, this configuration will contract itself to the initial state, which is shifted according to $\log_u \det T$, as $\widetilde{T}$ is the inverse of $T$ composed with the shift $(\det T)\I$. The copies of the original initial state evolve according to $T$. This is illustrated in Figure \ref{fig:2_to_n}.
\begin{figure}[htbp]
  \centering
  \scalebox{0.35}{\input{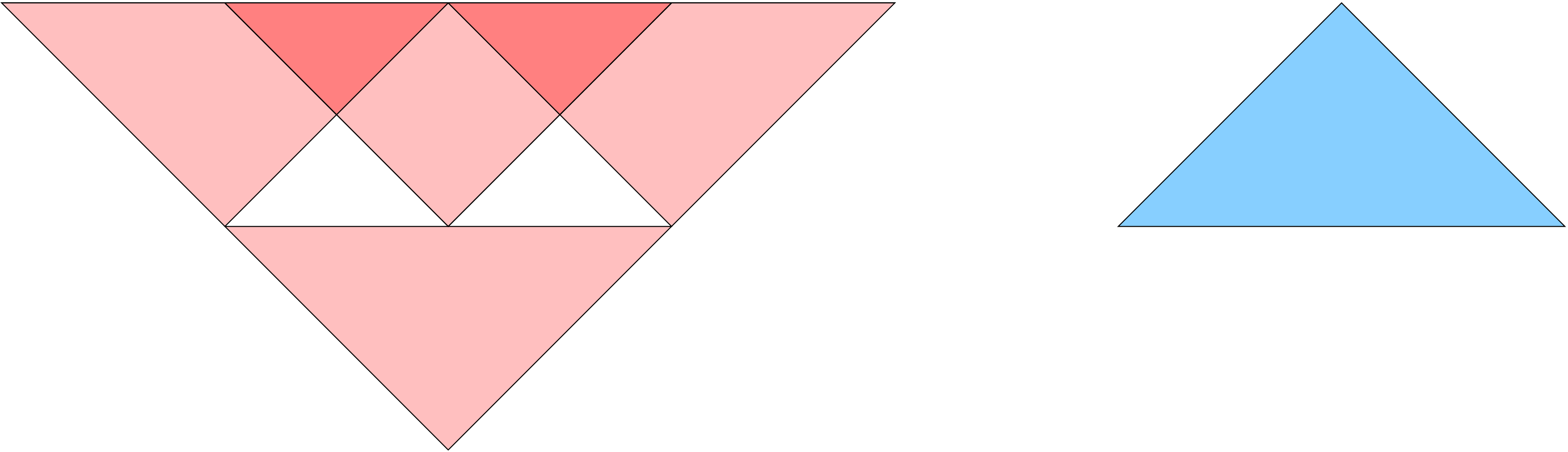_t}}
  \caption{The whole figure is the sum of $\abs{I}+2$ parts.}
  \label{fig:2_to_n}
\end{figure}
The figure suggests to divide the spacetime diagram into four parts $A$, $B$, $C$, and $D$ as shown in Figure~\ref{regleA} which overlap only on a single cell strip at the borders.
\begin{figure}
  \centering
  \subfloat[The first substitution rule.]{\label{regleA}\resizebox{\textwidth/2-3mm}{!}{\input{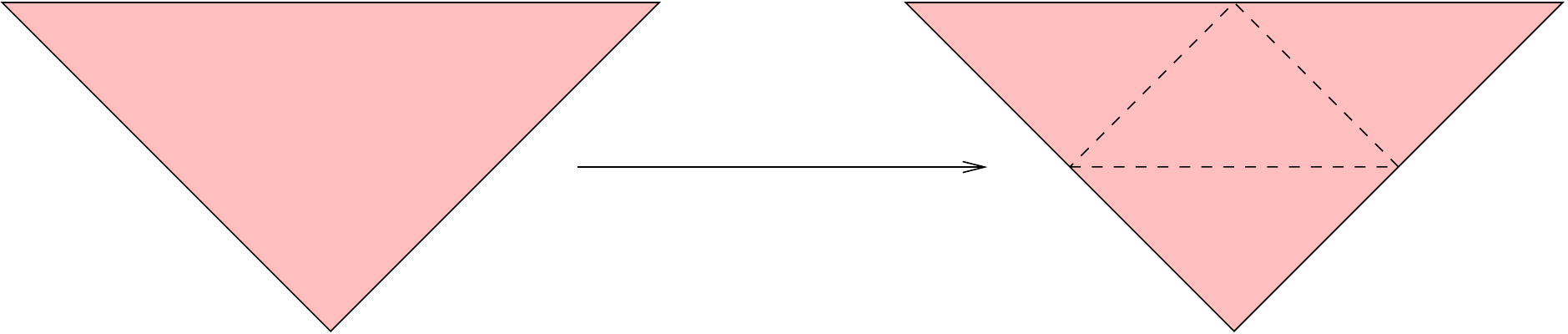_t}}}
  \hskip 3mm 
  \subfloat[The second substitution rule.]{\label{regleB}\resizebox{\textwidth/2-3mm}{!}{\input{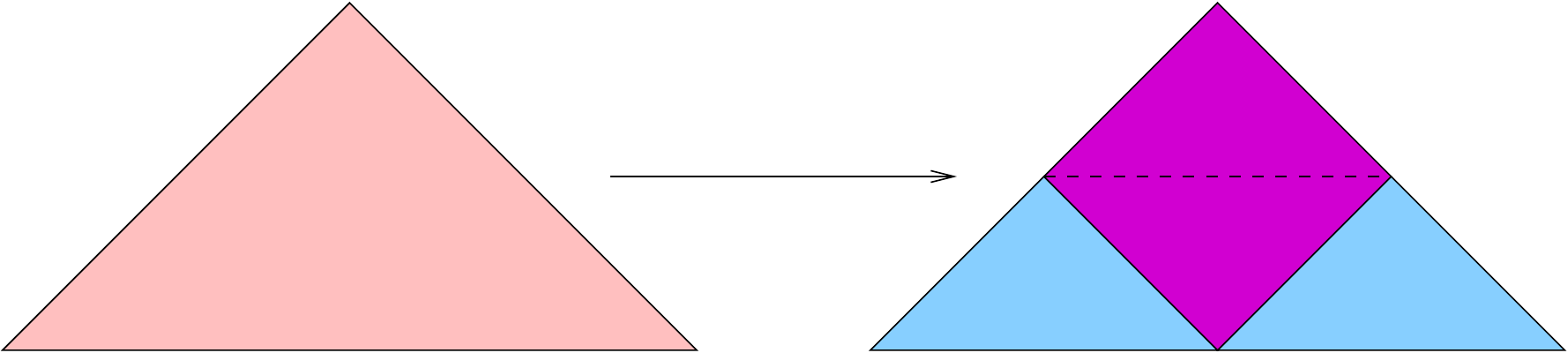_t}}}
  \caption{The first and the second substitution rules.}
\end{figure}

$A_2$, $A_3$, ad $A_4$ are copies of $A_1$, and $\ud{A}$ is marked by an upside down $A$ because it is the reverse evolution of the initial state under the CA $\widetilde{\sca}$ --- so actually it should logically be named $\widetilde{\ud{A}}$ or $\ud{\widetilde{A}}$, but since it always appears upside-down while $A$ always appears straight on its feet, there is no risk of confusion.

Let us assume that the sequence of rescaled spacetime diagrams up to step $2^n$ actually converges. Then that means $A_1$ should be, in the limit, a copy of the whole picture $A$, downsized by a factor $2$, so we rename it $A$. This gives us the first substitution rule, represented in Figure~\ref{regleA}.
The other three parts are still unknown, and we will name these patterns $B$, $C$ and $D$. Equation~\ref{eq:t_to_2_to_n} tells us what the other substitution rules are. 

Since Equation~\ref{eq:t_to_2_to_n} remains true after swapping $T$ and $\widetilde{T}$, $\ud{A}$ admits likewise a partition into $\ud{A}$, $\ud{B}$, $\ud{C}$ and $\ud{D}$. Summing all the parts shown in Figure~\ref{fig:2_to_n}, we get the top rightmost pattern of Figure~\ref{fig:three_steps}.
\begin{figure}[htbp]
  \centering
  \scalebox{0.5}{\input{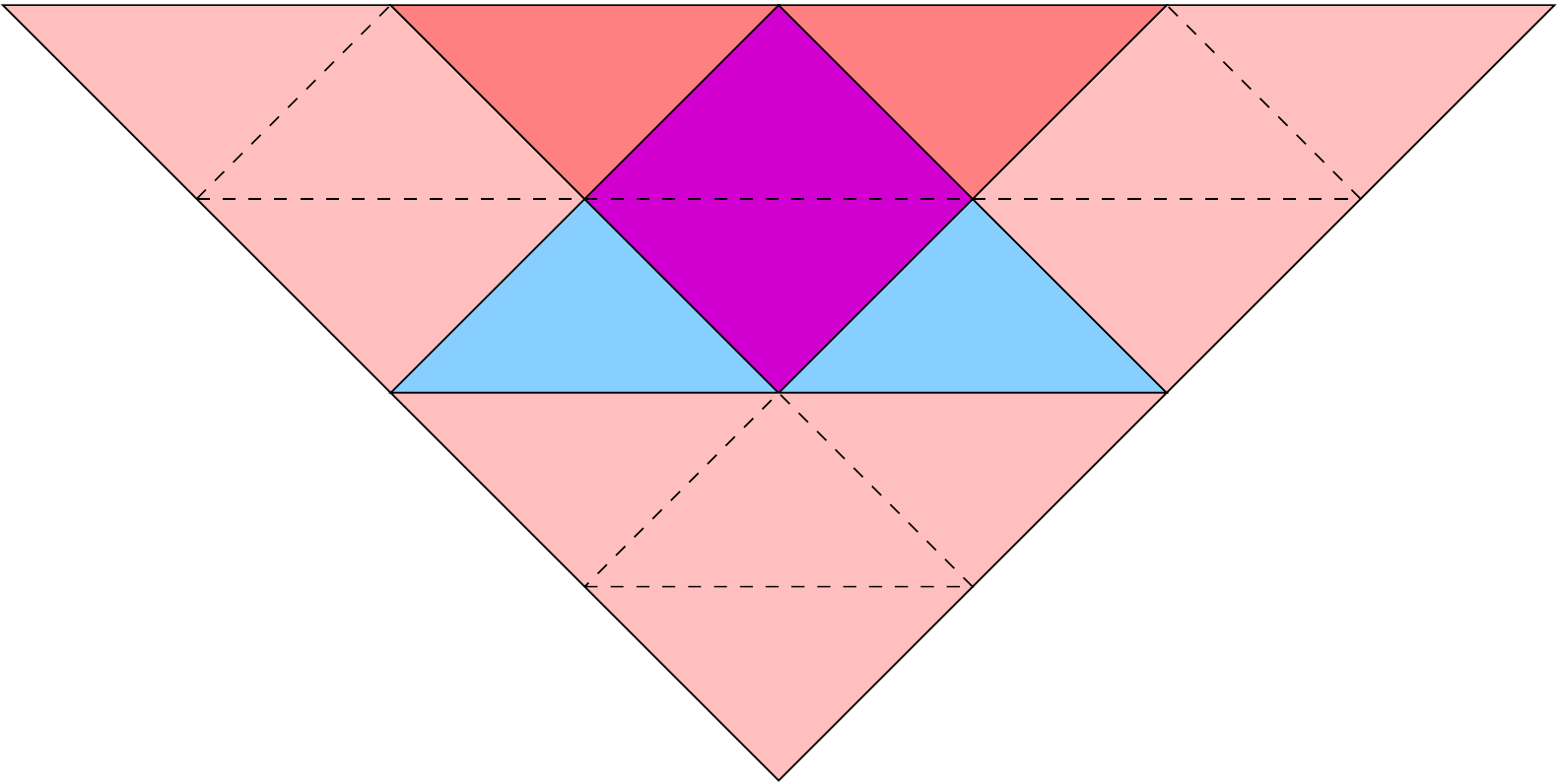_t}}
  \caption{Second step of the decomposition.}
  \label{fig:step2}
\end{figure}
Superimpo-sing our first substitution rule (Figure~\ref{regleA}) with our second step of the decomposition (Figure~\ref{fig:step2}), we get the new substitution rules represented by Figures~\ref{regleB}, \ref{regleC} and~\ref{regleD}.

\begin{figure}
  \centering
  \subfloat[The third substitution rule.]{\label{regleC}\resizebox{\textwidth/2-3mm}{!}{\input{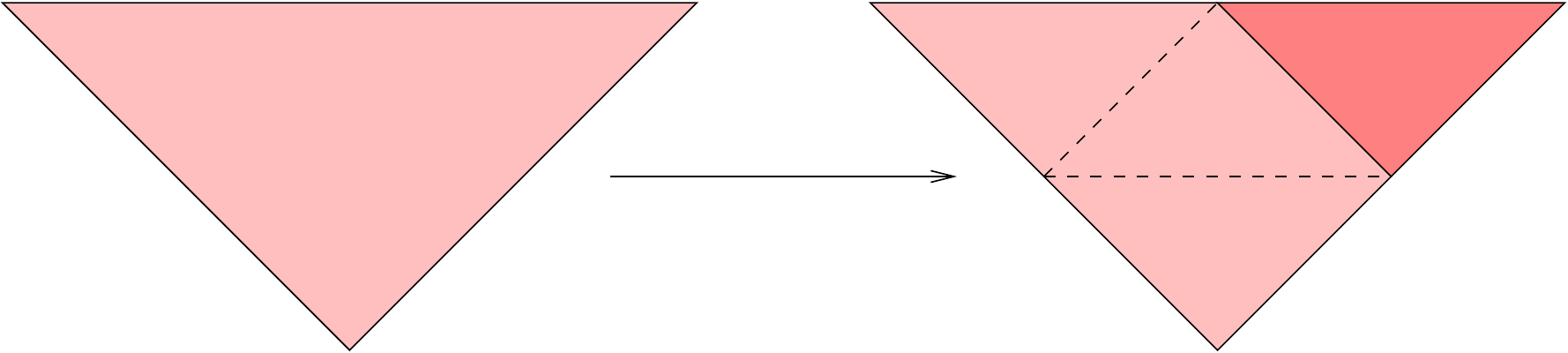_t}}}
  \hskip 3mm 
  \subfloat[The fourth substitution rule.]{\label{regleD}\resizebox{\textwidth/2-3mm}{!}{\input{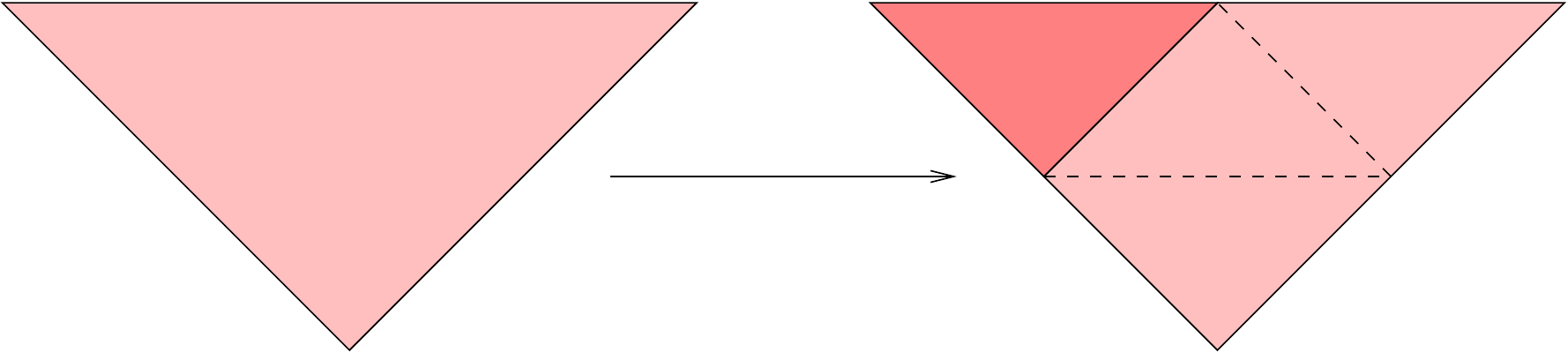_t}}}
  \caption{The third and the fourth substitution rules.}
\end{figure}

All the other substitution rules are deducible from these ones. First they are linear: for instance the substitutions for $C+D$ is the sum of the substitutions for $C$ and $D$, as shown in Figure~\ref{regleC+D}. We don't know \emph{a priori} what the sum of two patterns is, but we know that summing a pattern with itself should give $0$, for we are working in characteristic $2$. In this case $A+A$ and $B+B$ cancel out. In figure~\ref{fig:three_steps}, one can see how the characteristic white spaces emerge from the above substitution rule.  

\begin{figure}[htbp]
  \centering
  \subfloat[The fith substitution rule.]{\label{regleC+D}\resizebox{\textwidth/2-3mm}{!}{\input{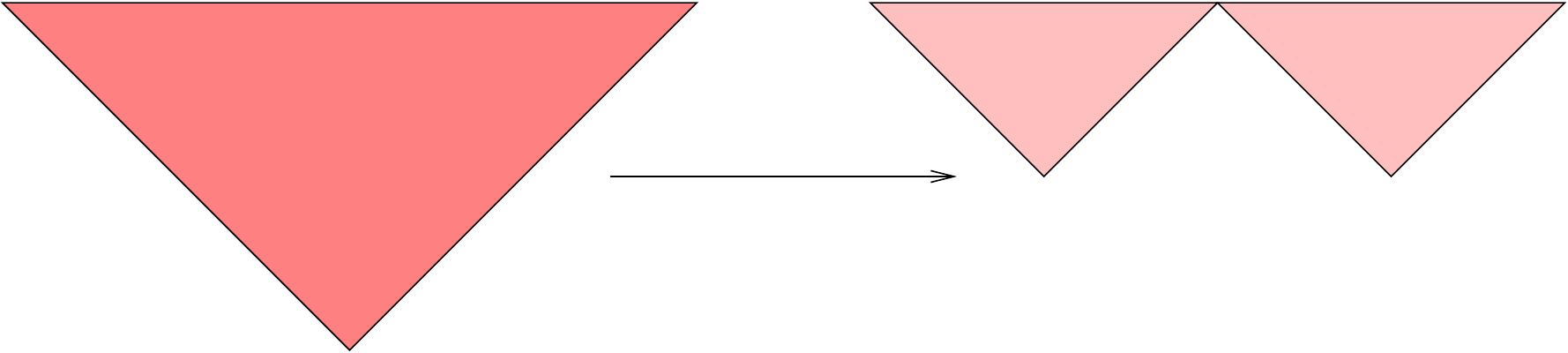_t}}}
  \hskip 3mm 
  \subfloat[The sixth substitution rule.]{\label{regleA+B}\resizebox{\textwidth/2-9mm}{!}{\input{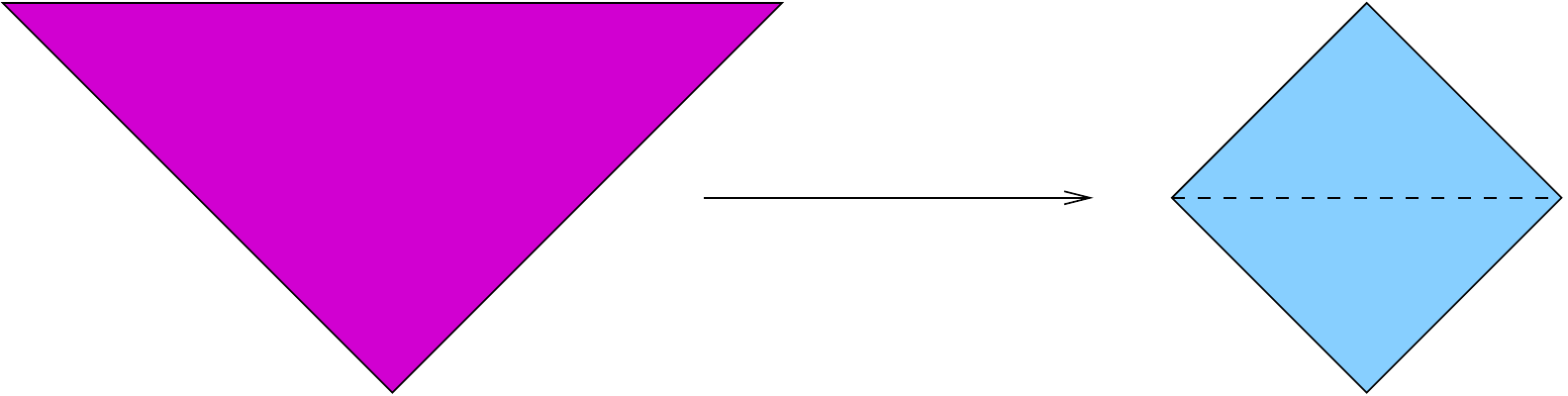_t}}}
  \caption{The fifth and sixth substitution rules are the superpositions of the third and fourth and the first and second rules respectively. One can see, how white spaces emerge because even superpostitions of the same pattern cancel out.}
\end{figure}

The problem with this scheme is that what is happening goes beyond simple juxtaposition of patterns. There can be cancellation at the border between patterns. And, sure, we know $C+C$ is blank, but how do we know $C+D$, for instance, is not? Well, generally we don't. If the initial state is itself blank, then the whole figure would be, and all tiles being blank is certainly a fixed point for all the substitution rules.

In this case, however, everything turns out well. It should first be noticed that in every part of the picture not tagged as ``blank'', an $A$ pattern can be found by refining a few more steps. Formally, let $G=(V,E)$ be the graph whose vertices are the different possible tiles (i.e., in this case, $A$, $B$, $C$, $D$, $\ud{A}$, $\ud{B}$, $\ud{C}$, $\ud{D}$, and the sums modulo $2$ of tiles having compatible shapes, including the blank tile $0$), and edges represent the transition rule in the following way: each vertex has four edges coming out of it, each one pointing to one of its subtiles.
In our case, the graph has the property that the set of vertices accessible from $A$, minus $0$, form a strongly connected component.

We may then distinguish two cases: either $A$ has a point in its interior, or it has points only on its border triangle.  In the first case, the unique non-empty compact defined by the substitution rule is actually the figure we're looking for.  Indeed, it follows from the property of connexity --- cf.\ Proposition~\ref{prop_aperiodic} ---  that every non-zero tile actually appearing in the decomposition of the figure has a non-empty interior. Thus, no matter what happens at the boundary between tiles, the figure constructed this way will always converge to the same compact.

\begin{figure}[htbp]
  \centering
\resizebox{\textwidth}{!}{\input{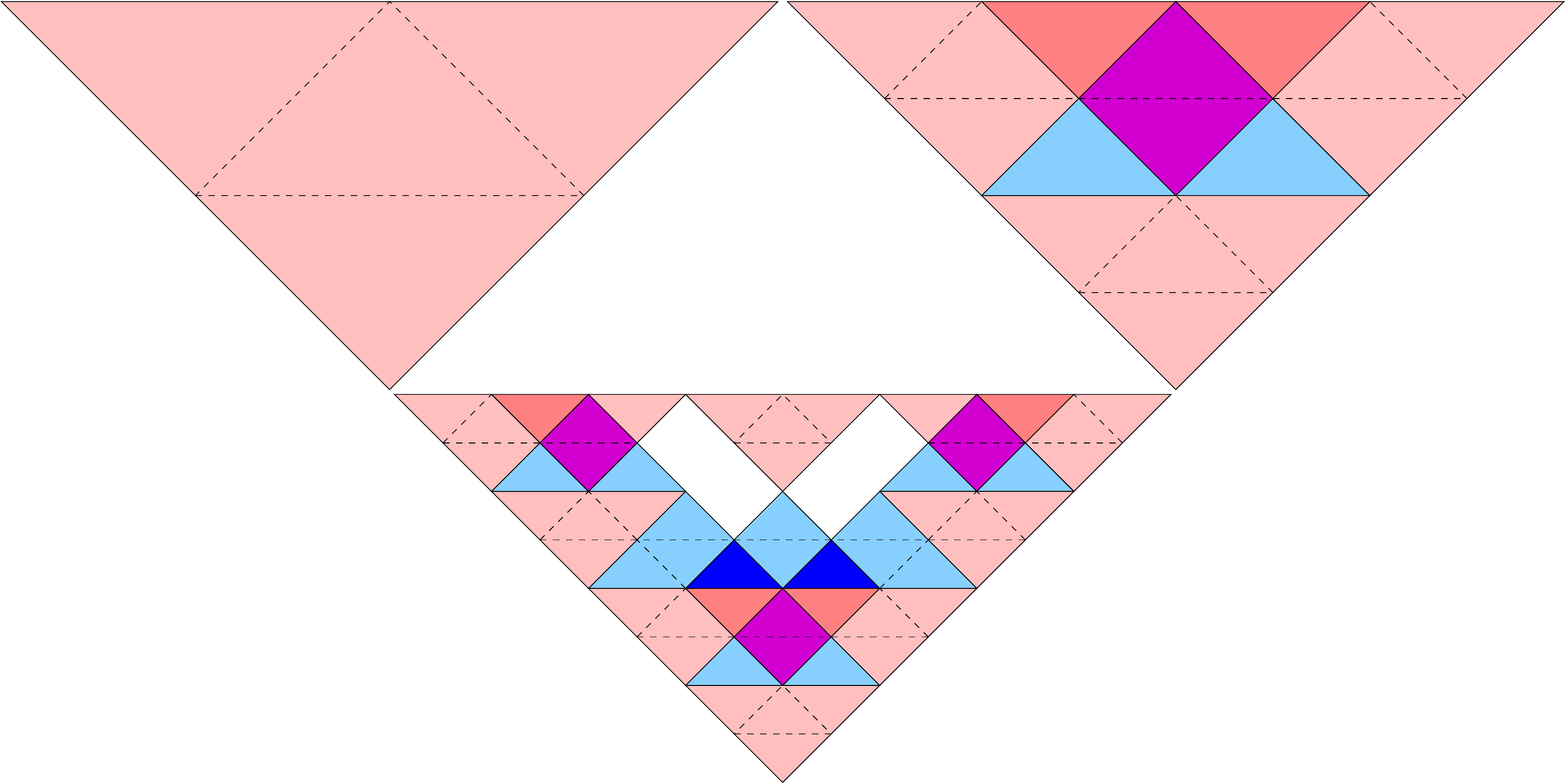_t}}
  \caption{Three decomposition steps for the spacetime diagram of $\sca$. One can clearly see the characteristic white spaces emerging.}
  \label{fig:three_steps}
\end{figure}
\section{Recursion and matrix substitution system}\label{proof_sec}
\label{section_recursion}
We will now present a general method to calculate the fractal dimension and average colour of the spacetime diagrams of linear CAs in $\joliM_d(\Z_{p^l}[u,u^{-1}])$. We will again illustrate the method with our example $\sca$, whereas the derivation is carried out for the general case. Thus the algorithm works as well on all CAs obeying our definition, e.g.\ the CAs presented in section~\ref{section_different_ca},  as it works on $\sca$. Of course with larger neighborhoods and groups of higher order the substitution system becomes larger and larger, so that one might want to use a computer to derive the substitution system.

Our approach is the following: from the minimal polynomial $\Pi$ of the CA $T$ (or any other polynomial fulfilling $\Pi(T)=0$) we derive a recursion relation for the $T^y_x$'s, the coefficients in $u^x$ of $T^y$.  We then forget about every other piece of information we might have on $T$, to concentrate only on this recursion: this shows that the fractal structure, except for contingent blank spaces, can be essentially derived just from the minimal polynomial of $T$.  We further develop our recursion scheme for $T$ until we can express every $T_x^y$ in terms of the $T_i^j$ of the first $m$ time steps with coefficients $\alpha_{j}(x-i,y)$. With a simple grouping of cells we deduce a matrix substitution system that enables us to generate the spacetime diagram of step $t=k^{n+1}$ directly from step $t=k^n$. Using this substitution system we can calculate the fractal dimension, the average colouring, and given an initial state also the whole space time diagram.

Now let $\Pi(X)\in R[u,u^{-1}][X]$ be a monic polynomial such that $\Pi(T)=0$:
\begin{equation}
\Pi(X) = X^m - \sum_{j=0}^{m-1} \lambda_{\Pi,j} X^j.
\end{equation}

According to the Cayley-Hamilton theorem, which we can apply in our case because $T$ is an endomorphism of a finite-dimensional free module over an abelian ring (see Theorem 3.1. of \cite{lang}), the characteristic polynomial of $T$ fulfills this condition, therefore we can always find such a polynomial. 
Let $\polyI$ be the finite set of exponents $i$'s such that the coefficient in $u^i$ of $\Pi$, seen as an element of $R[X][u,u^{-1}]$, is nonzero, so that we can write $\lambda_{\Pi,j}=\sum\limits_{i\in\polyI}\lambda_{\Pi,i,j}u^i$.  $\polyI$ is not to be confused with the neighbourhood of the CA, $I$, which won't play any role from now on.

For any $x,y,n\in\N$, $(x+y)^p\equiv x^p + y^p [p]$, and if $x\equiv y [p^n]$ then $x^p \equiv y^p [p^{n+1}]$.  Therefore, for any $x,y\in R$ and $n\in\N$, $\pa{x+y}^{p^{n+l-1}} = (x^{p^n} + y^{p^n})^{p^{l-1}}$.  Since the powers of $T$ commute pairwise, we get

\begin{equation}
\begin{array}{rcl}
T^{p^{n+2(l-1)} m} & = & \pa{\sum_{j=0}^{m-1} \lambda_{\Pi,j}^{p^{n+l-1}} T^{p^{n+l-1}j}}^{p^{l-1}}\\ &=&  \pa{ \sum_{j=0}^{m-1}\pa{\sum_{i\in \polyI}\lambda_{\Pi,i,j}^{p^{n}}u^{p ^n i}}^{p^{l-1}} T^{p^{n+l-1}j}}^{p ^{l-1}}.\end{array}
\label{eq_brute}
\end{equation}

For each $i,j$, the sequence $(\lambda_{\Pi,i,j}^{p^{n}})$ is ultimately periodic.  There exist therefore integers $N$ and $M$ such that for all $i,j$, $\lambda_{\Pi,i,j}^{p^{M+N}}=\lambda_{\Pi,i,j}^{p^{M}}$.  Let $k=p^N$; substituting $n$ by $M+Nn$ in~(\ref{eq_brute}), we get

\begin{equation}
T^{k^n p^{M+2(l-1)} m } =  \pa{ \sum_{j=0}^{m-1}\pa{\sum_{i\in \polyI}\lambda_{\Pi,i,j}^{p^M}u^{k^np^M i}}^{p^{l-1}} T^{k^{n}p^{M+l-1}j}}^{p ^{l-1}}.
\end{equation}

Hence, if we note $m'= p^{M+2(l-1)} m$ and expand this equation, we find that there is some finite subset $\polyI'$ of $\Z$ and some elements $\mu_{i,j}$ of $R$, for $i\in\polyI'$ and $j\in\doublecro{0;m'-1}$, such that for all $n\in\N$,

\begin{equation}
T^{k^n m'} =  \sum_{j=0}^{m'-1}\sum_{i\in\polyI'}\mu_{i,j} u^{k^n i} T^{k^{n}j}.
\end{equation}

We have now used everything we needed to know from the multiplicative structure on the ring of matrices.  As announced at the end of section~\ref{sec_related_work}, we will now get rid of it and concentrate only on the linear recurrence relation that we have just derived.  Remember that $T^j\in \joliM_d\pa{R\cro{u,u^{-1}}}$, and we are interested in the coefficient of $T^j$ in $u_i$, denoted $T^j_i$, so that $T^j=\sum\limits_{i\in\polyI} T^j_i u^i$.  
We thus get the following relation: $T^{k^n m'+ y}_x = \sum_{i\in \polyI'}\sum_{j=0}^{m'-1}  \mu_{i,j} T_{x-k^n i}^{y+k^n j}$, which we rewrite in this form:

\begin{equation}
T^{y}_x = \sum_{(i,j)\in \polyI'\times\doublecro{0;m'-1}}  \mu_{i,j} T^{g_{i,j}(y)}_{x+f_{i,j}(y)}
\label{barbibulle}
\end{equation}

where $f_{i,j}(y)= -k^n i$ and $g_{i,j}(y)=y- k^n (m'-j)$, which of course works with any $n$, but we will choose $n=\floor{\log_k \frac{y}{m'}}$.  In order to emphasise that the rest of the proof will use only a minimal structure, we state in the next proposition what will actually be proven, and change the notation from $T$, which was an element of $\joliM_d\pa{R[u,u^{-1}]}$, to $\Xi$, an element of a more arbitrary $R$-module.  It is straightforward to check that $T$ fulfills the hypotheses of the proposition.

\begin{prop}\label{introduce_Xi}
Let $M$ be a finite $R$-module, $k$ a positive integer, $\Lambda$ a finite set of indices, and for $i\in\Lambda$, $\mu_i\in R$, $f_i:\llbracket m ; + \infty \llbracket\to\Z$ and $g_i:\llbracket m ; + \infty \llbracket\to\N$ such that for all $y\in \llbracket m ; + \infty \llbracket$ and $t\in\doublecro{0;k-1}$, 

\begin{itemize}
\item $g_i(y)<y$; 

\item $f_i(ky+t)=k f_i(y)$ and $g_i(ky+t)=k g_i(y)+t$.

\end{itemize}

For $x\in\Z\times\N$, let $\Xi^y_x\in M$ be such that when $ y \geq  m$,	

\begin{equation}
\Xi^y_x =\sum_{i\in\Lambda}\mu_i \Xi^{g_i(y)}_{x+f_i(y)}.
\label{eqX}
\end{equation}

Then there exists a finite set $E$ and a function $e:\Z\times\N\to E$ such that

\begin{itemize}
\item $\Xi^y_x$ is a function of $e(x,y)$;
\item for $s,t\in\doublecro{0;k-1}$, $e\pa{kx+s,ky+t}$ is a function of $s$, $t$, and $e(x,y)$.
\end{itemize}

\end{prop}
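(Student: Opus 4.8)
The plan is to convert the recurrence~(\ref{eqX}) into an explicit linear expansion of $\Xi^y_x$ in terms of the ``base values'' $\Xi^j_{x'}$ with $j<m$, and to read the automaton off the coefficients of that expansion. First I would define, by induction on $y$ (legitimate since $g_i(y)<y$), coefficients $c^y_{j,\delta}\in R$ for $j\in\doublecro{0;m-1}$ and $\delta\in\Z$: put $c^y_{j,\delta}=1$ when $y<m$, $j=y$, $\delta=0$ and $0$ otherwise in that range, and $c^y_{j,\delta}=\sum_{i\in\Lambda}\mu_i\,c^{g_i(y)}_{j,\delta-f_i(y)}$ when $y\geq m$. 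A one-line induction then gives $\Xi^y_x=\sum_{j,\delta}c^y_{j,\delta}\,\Xi^j_{x+\delta}$, a finite sum for each $y$.

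The conceptual core is a \emph{self-similar} version of this expansion: for all $y\in\N$, $x\in\Z$ and $s,t\in\doublecro{0;k-1}$,
\begin{equation*}
\Xi^{ky+t}_{kx+s}=\sum_{j<m,\ \delta}c^y_{j,\delta}\,\Xi^{kj+t}_{k(x+\delta)+s},
\end{equation*}
proved again by induction on $y$; the inductive step is exactly where the scaling hypotheses $f_i(ky+t)=kf_i(y)$ and $g_i(ky+t)=kg_i(y)+t$ enter, making the recursion for $\Xi^{ky+t}$ mirror, digit by digit, that for $\Xi^y$. Then, since every index $kj+t$ with $j<m$, $t<k$ lies in $\doublecro{0;km-1}$, iterating~(\ref{eqX}) on $\Xi^{kj+t}_w$ terminates after at most $km$ steps with total spatial shift bounded by a constant $D$ (depending only on $k$, $m$, and the finitely many values taken by the $f_i$ on $\doublecro{m;km-1}$), so $\Xi^{kj+t}_w=\sum_{j'<m,\ \abs{\delta'}\leq D}\gamma_{j,t,j',\delta'}\,\Xi^{j'}_{w+\delta'}$ with constants $\gamma\in R$. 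Substituting this into the self-similar expansion produces a fixed formula expressing each $c^{ky+t}_{j',\eta}$ as an $R$-combination of the $c^{y}_{j,\delta}$ with $\delta=(\eta-\delta')/k$ and $\abs{\delta'}\leq D$: the expansion coefficients at $(kx+s,ky+t)$ are ``localized $k$-dilates'' of those at $(x,y)$.

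To get a \emph{finite} state I would use that the base values are finitely supported — in the application $\Xi^j=T^j$ is a Laurent polynomial matrix, so $\Xi^j_{x'}=0$ for $\abs{x'}>B_0$ and every $j<m$. Fixing $B\geq\max\pa{B_0,\ \ceil{D/(k-1)}+1}$ (the case $k=1$ is trivial, take $E=M$), I set $E=R^{\doublecro{0;m-1}\times\doublecro{-B;B}}$, a finite set, and $e(x,y)=\pa{c^y_{j,\,-x+\delta_0}}_{j<m,\ \abs{\delta_0}\leq B}$. Then $\Xi^y_x=\sum_{j,\abs{\delta_0}\leq B_0}e(x,y)_{j,\delta_0}\,\Xi^j_{\delta_0}$ is a function of $e(x,y)$, because only shifts $\delta$ with $x+\delta\in\doublecro{-B_0;B_0}$ contribute and those coefficients are among the recorded ones; and the formula from the previous step shows that each entry of $e(kx+s,ky+t)$ is the prescribed $R$-combination of entries $e(x,y)_{j,\delta_0'}$ with $\abs{\delta_0'}\leq(B+D+k-1)/k\leq B$, hence is a function of $s$, $t$, and $e(x,y)$. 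Both required properties follow, so $T$ — which satisfies the hypotheses — yields the matrix substitution system, and with it Theorem~\ref{bigthm}.

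I expect the real work to be the bookkeeping that makes all these windows close up: the key realization is that the right state is the \emph{finitely many relevant expansion coefficients} and not the raw values $\Xi^y_{x+\delta}$, whose relevant range grows linearly in $y$, and that compact support of the base values (free for a cellular automaton) is genuinely used. The self-similar expansion is essentially automatic once one trusts that the scaling relations on $f_i$ and $g_i$ were put there precisely to produce it.
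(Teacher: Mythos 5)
Your proof is correct and follows essentially the same route as the paper's: expand $\Xi^y_x$ over the base rows $j<m$, use the scaling hypotheses on $f_i,g_i$ to show that the expansion coefficients at $(kx+s,ky+t)$ are localized $k$-dilates of those at $(x,y)$, bound the residual spatial spread of the leftover decomposition of the $\Xi^{kj+t}$ with $kj+t\geq m$, and group a finite window of coefficients into the state (the paper's $\beta_{\Pi,\cdot}$ is exactly your windowed $e$). The one place you are more careful than the paper is in explicitly invoking the finite support of the base values $\Xi^j$, $j<m$, to make $\Xi^y_x$ a function of the windowed coefficients --- a hypothesis the statement of the proposition omits but which is genuinely needed (and automatic in the intended application, where $\Xi^j=T^j$ is a Laurent polynomial matrix).
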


The introduction of a new function $e$ in this proposition comes from the need of a scaling property, expressing that the state at point $(kx+s,ky+t)$ can be deduced from the state at point $(x,y)$.  Such a property does not follow immediately from Equation~(\ref{eqX}), but it is possible to expand the state space from $M$ to $E$, and to put more information into $e$ than into $\Xi$, so as to fulfill the scaling property.  An immediate consequence of this proposition is that the spacetime diagrams of $\Xi^y_x$ of order $k^n$ can be described by colored matrix substitution systems, so that Theorem~\ref{bigthm} will follow from Proposition~\ref{prop_existsk}.  Let us now prove Proposition~\ref{introduce_Xi}.

\begin{figure}[htbp]
\centering
\scalebox{0.65}
{\Tree [.(1,3) [.(0,2) (-1,1) (0,1) (1,1) (0,0) ]   [.(1,2) (0,1) (1,1) (2,1) (1,0) ] [.(2,2) (1,1) (2,1) (3,1) (2,0) ] (1,1) ]}
\caption{Recursive calls to (\ref{eqX}) for $T=\sca$\label{arbre1}.}
\end{figure}

If $y\geq m$, we can recursively apply Equation~(\ref{eqX}) to give an expression of $\Xi_x^y$ in terms of a linear combination of $\Xi_{x'}^{y'}$'s with $y'< y$.  

Let us consider the case $T=\sca$ as an example. In this case, $R=\F_2$, $k=m=2$, and $\Pi_\Theta(X)=X^2 + \pa{u^{-1}+1+u}X + 1$.  Equation (\ref{eqX}) gives $\Xi^3_1=\Xi^1_1 + \Xi^2_{0}+\Xi^2_1+\Xi^2_2$.  Whereas the decomposition halts here for $\Xi_1^1$, it goes on for the other three terms of the sum, again with $n=0$. For instance, $\Xi^2_{0}$ is transformed into $\Xi^0_{0}+\Xi^1_{-1}+\Xi^1_{0}+\Xi^1_{1}$.  This gives the tree of recursive calls in Figure~\ref{arbre1}, where each node is labelled $(x,y)$ for $\Xi_x^y$; in that particular case, because $R=\F_2$, the only possible coefficient are $0$ and $1$, so they are simply represented by the absence or the presence of the corresponding term.

Starting from any point $(x,y)\in\Z\times\N$ and using recursively equation~(\ref{eqX}), we get to the expression 
\begin{equation}
\Xi_x^y = \sum_{i}\sum_{j=0}^{m-1}\alpha_{\Pi,i,j}(x,y) \Xi_i^j,
\label{penultimate_decomposition}
\end{equation}
which we take as a definition of $\alpha_{\Pi,i,j}(x,y)$.  On Figure~\ref{arbre1}, this corresponds to (parity) counting the leaves tagged $(i,j)$; for instance, there are 4 leaves tagged $(1,1)$, so $\alpha_{\Pi_\sca,1,1}(1,3)=0$ and $(1,1)$ does not appear as a node in Figure~\ref{arbre1}.

\begin{figure}[htbp]
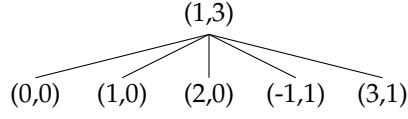

\centering
\Tree [.(1,3) (0,0) (1,0) (2,0) (-1,1) (3,1) ]
\caption{\label{arbre2}$\Xi_1^3$ as a sum of the simplest terms, when $T=\Theta$.}
\end{figure}

Since the relation~(\ref{eqX}) is invariant under translations of the parameter $x$, we have $\alpha_{i,j}(x,y)=\alpha_{0,j}(x-i,y)$.  Noting $\alpha_{j}:=\alpha_{0,j}$, we get the following equation:
\begin{equation}
\Xi_x^y = \sum_{i\in\Z}\sum_{j=0}^{m-1}\alpha_{j}(x-i,y) \Xi_i^j.
\label{ultimate_decomposition}
\end{equation}

Let us now show that $\alpha_{\cdot}\pa{kx+s,ky+t}$, for $s,t\in\doublecro{0;k-1}$, is a function of $s$, $t$, and the $\alpha_{\cdot}\pa{x', y}$'s, where $x'$ ranges over some neighbourhood of $x$.  By substituting $x$ with $kx+s$ and $y$ with $ky+t$ in Equation~(\ref{eqX}), we get
\begin{equation}
\label{eq:recursion_1}
\Xi^{ky+t}_{kx+s}=\sum_{i\in \Lambda}\mu_{i}\Xi^{kg_i(y)+t}_{k\pa{x+f_i(y)}+s}.
\end{equation}

Because the indices and exponents of $\Xi$ on the left and right side of this equation have undergone the same transformation $(x,y)\mapsto (kx+s,ky+t)$,  we arrive recursively at this point

\begin{equation}
\Xi^{ky+t}_{kx+s}= \sum_{i\in\Z}\sum_{j=0}^{m-1} \alpha_{j}(x-i,y) \Xi^{kj+t}_{ki+s}
\label{end_of_recursion}
\end{equation}

which we want to compare to the following equation, directly deduced from (\ref{ultimate_decomposition}):
\begin{equation}
\Xi^{ky+t}_{kx+s}= \sum_{i\in\Z}\sum_{j=0}^{m-1} \alpha_{j}(kx+s-i,ky+t) \Xi^{j}_{i}
\label{ultimate_decomposition_kx}
\end{equation}

Of course, there can be terms in (\ref{end_of_recursion}) with $kj+t\geq m$, so that the decomposition is not over: it then needs to be performed to its end.  For instance, in our example $T=\Theta$, if instead of the tree of recursive calls for $\Xi_1^3$ we want that of $\Xi_3^6=\Xi_{2\times 1 + 1}^{2\times 3}$, we know that the upper part of the tree will be Figure~\ref{arbre3}.

\begin{figure}[htbp]
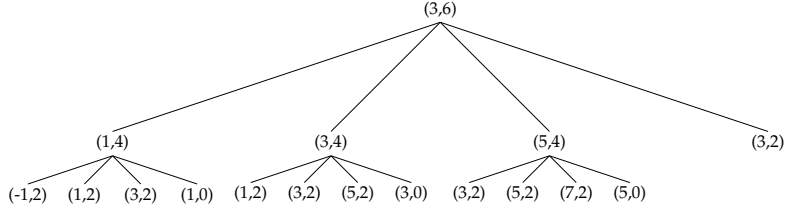

\centering
\scalebox{0.65}
{\Tree [.(3,6) [.(1,4) (-1,2) (1,2) (3,2) (1,0) ]   [.(3,4) (1,2) (3,2) (5,2) (3,0) ] [.(5,4) (3,2) (5,2) (7,2) (5,0) ] (3,2) ]}
\caption{\label{arbre3}Recursive calls to (\ref{eqX}) for $T=\Theta$, minus some leaves.}
\end{figure}

As for the lower part, the decomposition is not over, so it needs to be performed to its end.  Thanks to linearity, we don't need to remember the whole tree, just its leaves.  As is well known, a tree grows from its leaves, and in the same way we need only apply the transformation $(i,j)\mapsto (2i+1,2j)$ to the tree in Figure~\ref{arbre2} in order to find the correct decomposition for $\Xi_3^6$, which gives Figure~\ref{arbre4}.

\begin{figure}[htbp]
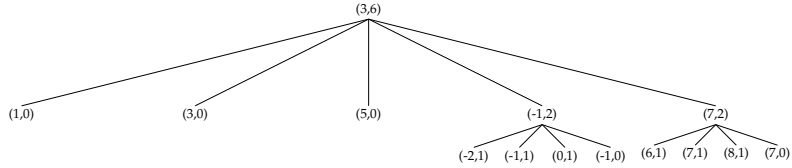

\centering
\scalebox{0.5}
{\Tree [.(3,6) (1,0) (3,0) (5,0) [.(-1,2) (-2,1) (-1,1) (0,1) (-1,0) ] [.(7,2) (6,1) (7,1) (8,1) (7,0) ] ]}
\caption{\label{arbre4}Trees grow from their leaves.}
\end{figure}

However, even keeping track only of the leaves would require an unbounded alphabet, that is why we 
defined $\alpha_{\Pi,j}(x,y)=\alpha_{\Pi,i,j}(x+i,y)$, i.e.\ from the tree for $\Xi_x^y$ we want to keep track only of the leaves labelled by $(0,j)$.  Alas, that will not work so easily: in Figure~\ref{arbre4}, $(0,1)$ appears once as a son of $(-1,2)$, which comes from node $(-1,1)$ in Figure~\ref{arbre2}.  So it appears we cannot just forget about everything else.

As we just saw on one example, $\alpha_{\Pi,\cdot}(kx+s,ky+t)$ typically does not depend on $\alpha_{\Pi,\cdot}\pa{x,y}$.  Instead the final decomposition of (\ref{end_of_recursion}) will relate the coefficients $\alpha_{\Pi,j}(kx+s-i,ky+t)$ of the $\Xi^j_i$ in (\ref{ultimate_decomposition_kx}) to sums of the $\alpha_{\Pi,j}^k(x-i,y)$'s. Therefore $\alpha_{\Pi,\cdot}(kx+s,ky+t)$ depends on  
the $\alpha_{\Pi,\cdot}\pa{x+i,y}$'s for $i$ ranging over some finite set $\joliS$  depending only on $\Pi$. However, this is not much of a problem, as a simple grouping will take care of it --- a technique commonly attributed to~\cite{Willson1987}. Let us show that $\joliS$ is finite. Since $j\in\acco{0;\ldots;m-1}$ and $t\in\acco{0;\ldots;k-1}$, the $kj+t$ appearing as an exponent of $\Xi$ in~(\ref{end_of_recursion}) is in $\acco{0;\ldots;km-1}$. We therefore have to use at most $\pa{k-1}m$ recursive calls to~(\ref{eqX}) in order to get down to coefficients $\Xi_x^y$ with $y<m$, each one of them decreasing the exponent by at least $1$.  Each one of them also increases the index by $f_i(y)$; since both $\Lambda$ and $\doublecro{0;km-1}$ are finite, the set of possible $f_i(y)$'s is also bounded by some $M$, and the total variation in the index, i.e. $\joliS$, is then bounded by $\pa{k-1}mM$; let us say $\joliS\incl \doublecro{d_{\min};d_{\max}}$.

Let us now introduce $\beta_{\Pi,\cdot}(x,y)= \pa{\alpha_{\cdot}(x-i,y)}_{i\in \joliS'}$, where $\joliS'= \doublecro{\delta_{\min};\delta_{\max}}$, $\delta_{\max}$ being such that $\delta_{\max}\geq d_{\max}+\ceil{\frac{\delta_{\max}}{k}}$ and $\delta_{\min}$ such that $\delta_{\min}\leq d_{\min}-1 + \ceil{\frac{\delta_{\min}+1}{k}}$. This time, for $s,t\in\acco{0;\ldots;k-1}$, $\beta_{\Pi,\cdot}\pa{kx+s,ky+t}$ does really depend only on $\beta_{\Pi,\cdot}\pa{x,y}$.  Indeed, 
$$\beta_{\Pi,\cdot}\pa{kx+s,ky+t}=\pa{\alpha_{\cdot}(kx+s-i,ky+t)}_{i\in \joliS'},$$
and each $\alpha_{\cdot}\pa{kx+s-i,ky+t}$ depends on $\pa{\alpha_{\cdot}\pa{x+\floor{\frac{s-i}{k}}-j,y}}_{j\in\joliS}$ only; the choice of $\joliS'$ has been made so that $j-\floor{\frac{s-i}{k}}\in\joliS'$.  This concludes the proof of Proposition~\ref{introduce_Xi}, as we can choose $E=M^{\doublecro{0;m-1}\times\joliS'}$, with $e(x,y) (j,i)= \alpha_{j}(x-i,y)$.

\subsection{Example: $\sca$}
In the case of $\sca$, Equation~(\ref{end_of_recursion}) becomes
\begin{equation}
\Xi^{2y+t}_{2x+s}= \sum_{i} \alpha_{\Theta,0}(x-i,y) \Xi^{t}_{2i+s}+\alpha_{\Theta,1}(x-i,y) \Xi^{2+t}_{2i+s}.
\end{equation}

The first term is now elementary, but the second one has to be decomposed once more, i.e.
\begin{equation}
\Xi^{2+t}_{2i+s}= \Xi^t_{2i+s} + \Xi^{1+t}_{2i+s-1}+\Xi^{1+t}_{2i+s}+\Xi^{1+t}_{2i+s+1},
\end{equation}
which is the end of it if $t=0$, but not if $t=1$, where we get
\begin{equation}
\Xi^3_{2i+s} =  \Xi^0_{2i+s-1} + \Xi^0_{2i+s}+\Xi^0_{2i+s+1}+\Xi^{1}_{2i+s-2}+\Xi^{1}_{2i+s+2}.
\end{equation}
Comparing this with (\ref{ultimate_decomposition_kx}) we can deduce the substitution rule of $\alpha_{\sca,\cdot}$. It can then be written in the following way, where for convenience $\alpha_{\sca,\cdot}$ is represented as $\begin{array}{|c|}\hline\alpha_{\sca,1}\\ \alpha_{\sca,0}\\ \hline \end{array}$:

\begin{displaymath}
\begin{array}{c}
\begin{array}{|c|}
  \hline
  \alpha_{\sca,\cdot}(x,y)\\
  \hline
\end{array}\\
\downarrow\\
\begin{array}{|c|c|}
  \hline
  \alpha_{\sca,\cdot}(2x,2y+1) & \alpha_{\sca,\cdot}(2x+1,2y+1)\\
  \hline
  \alpha_{\sca,\cdot}(2x,2y) & \alpha_{\sca,\cdot}(2x+1,2y)\\
  \hline
\end{array}\\
\rotatebox[origin=c]{90}{=}\\
\scalebox{0.85}
{$\begin{array}{|c|c|}
  \hline
  \begin{array}{c} \alpha_{\sca,0}\pa{x,y}+\alpha_{\sca,1}\pa{x-1,y}+\alpha_{\sca,1}\pa{x+1,y} \\ \alpha_{\sca,1}(x,y) \end{array} & \begin{array}{c} 0\\ \alpha_{\sca,1}(x,y)+\alpha_{\sca,1}(x+1,y) \end{array}\\
  \hline
  \begin{array}{c}\alpha_{\sca,1}(x,y)  \\ \alpha_{\sca,0}\pa{x,y}+\alpha_{\sca,1}\pa{x,y}\end{array} &  \begin{array}{c} \alpha_{\sca,1}(x,y)+\alpha_{\sca,1}(x+1,y)  \\ 0 \end{array}\\
  \hline
\end{array}$}
\end{array}
\end{displaymath}

If we follow exactly what has been said in the general case, we ought to consider the grouping $\doublecro{-2;3}$.  However, this general bound is obviously too rough in the case of $\Theta$, where we will just have to take $\doublecro{-1;2}$.
We will represent the grouping in the form 
\begin{displaymath}
\begin{array}{|cccc|}
  \hline
  \alpha_{\sca,1}(x-1,y) & \alpha_{\sca,1}(x,y) & \alpha_{\sca,1}(x+1,y) & \alpha_{\sca,1}(x+2,y)\\
  \alpha_{\sca,0}(x-1,y) & \alpha_{\sca,0}(x,y) & \alpha_{\sca,0}(x+1,y) & \alpha_{\sca,0}(x+2,y)\\ 
  \hline
\end{array}
\end{displaymath}
The alphabet has thus size 256, and the substitution system is described by

\begin{center}
\scalebox{0.8}
{$\begin{array}{c}
 \begin{array}{|c|}
    \hline
    \begin{array}{cccc}
      a & b & c & d\\
      e & f & g & h
    \end{array}\\
    \hline
  \end{array}
\\
  \downarrow
\\
  \begin{array}{|c|c|}
    \hline
    \begin{array}{cccc}
      0 & a+c+f & 0 & b+d+g \\
      a+b & b & b+c & c 
    \end{array} 
    & 
    \begin{array}{cccc}
      a+c+f & 0 & b+d+g & 0 \\
      b & b+c & c & c+d 
    \end{array}
    \\
    \hline
    \begin{array}{cccc}
      a+b & b & b+c & c\\
      0 & b+f & 0 & c+g
    \end{array} 
    & 
    \begin{array}{cccc}
      b & b+c & c & c+d \\
      b+f & 0 & c+g & 0 
    \end{array}
    \\
    \hline
  \end{array}
\end{array}
$}
\end{center}

Let us denote by $A$ the matrix having a $1$ in position $a$ and $0$ elsewhere, $B$ the matrix having a $1$ only in position $b$, and so on. For these matrices, we will denote the sum of matrices by a simple juxtaposition: $AB$ will mean $A+B$, as the matrix multiplication has no meaning in this context.

Since $T^0_x=\delta_{x,0} T^0_0$, the starting position, with which we describe the whole line number $0$, is $\begin{array}{c|c|c|c|c|c|c|c}\hline \cdots&0&H&G&F&E&0&\cdots\\ \hline\end{array}$. Since we have, for instance, the rule 
$
\begin{array}{|c|}
  \hline
  F\\
  \hline
\end{array}
\to 
\begin{array}{|c|c|}
  \hline
  B & A \\
  \hline
  F & E\\
  \hline
\end{array}
\label{theta_scheme_2}$, the graph derived from this substitution system is aperiodic; that means that, in whatever way $A,B,C,\ldots$ are represented, either as colored dots or as white dots, the pattern converges, and the fractal structure is described by this matrix substitution system (see Section~\ref{section_diagram}).

To calculate the fractal dimension of our spacetime diagram we use the transition matrix of the matrix substitution system, which contains the information about the images of all states. The line corresponding to $F$ would contain a $1$ in the rows of $A$, $B$, $E$, and $F$ and zeros elsewhere. As every cell gives rise to four new cells the sum of all entries in each column of the matrix is $4$. We thus deal with a sparse $256 \times 256$ matrix.  The base $2$ logarithm of the second largest eigenvalue of this matrix is the fractal dimension of the spacetime diagram (cf.\ for instance \cite{Willson1987}).  Here this gives a fractal dimension of $\log_2\frac{3+\sqrt{17}}{2}\simeq 1.8325$, as also found in \cite{macfarlane}. 

Let us note that up to this point our analysis for $\sca$ is word for word valid for all CA in $\joliM_2(\Z_2[u,u^{-1}])$ of determinant $1$ and trace $u^{-1}+1+u$. The additional information is only used for the actual colouring of the picture. In general all CA with the same minimal polynomial have the same substitution system, and in dimension $2$ the minimal polynomial is entirely determined by the trace and the determinant. The fractal we get if we use the substitution system starting from $\begin{array}{c|c|c|c|c|c|c|c}\hline \cdots&0&H&G&F&E&0&\cdots\\ \hline\end{array}$ is shown in Figure~\ref{fig:sub_theta}.

\begin{figure}[htbp]
  \centering
  \includegraphics[width=\textwidth]{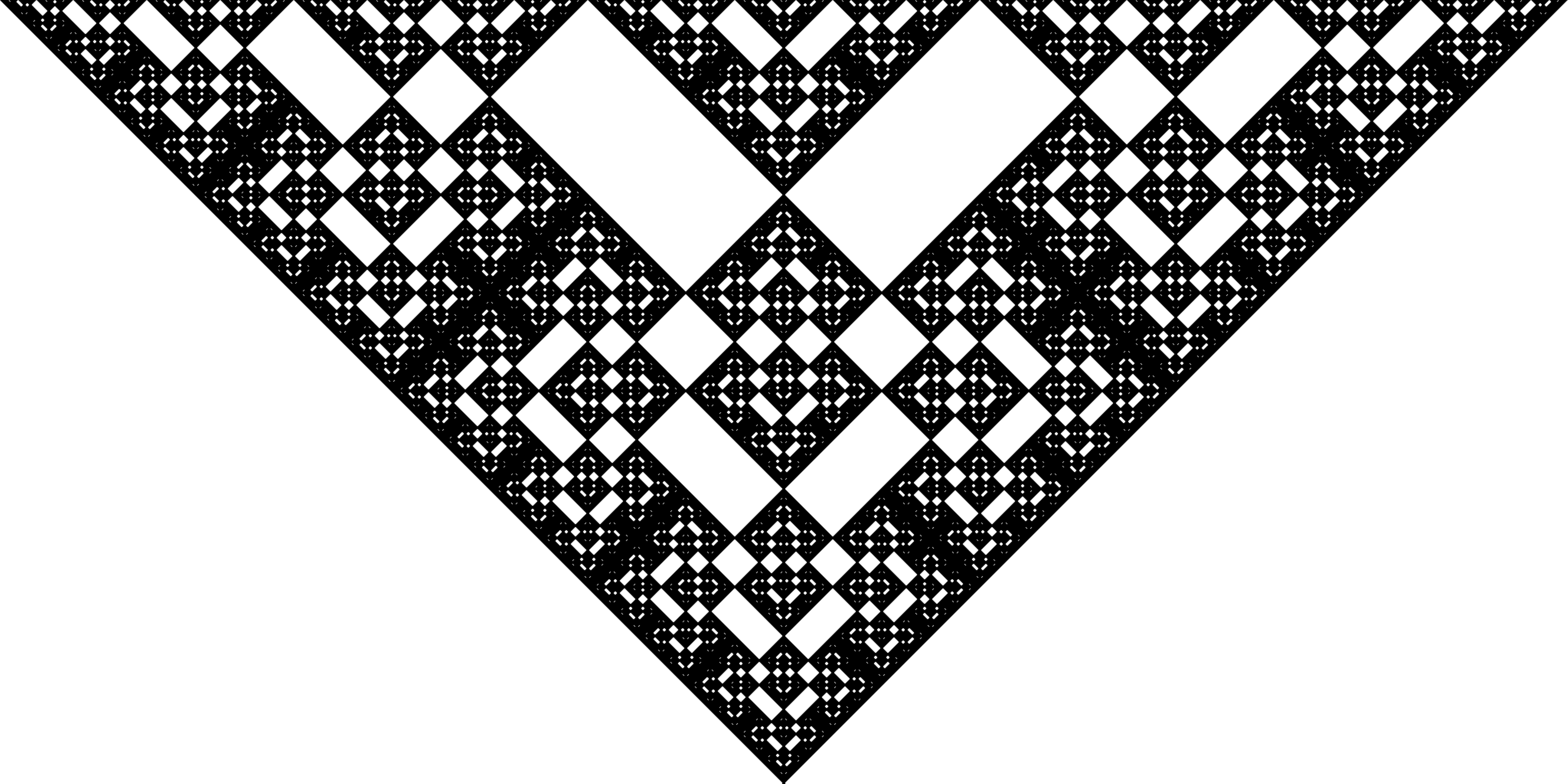}
  \caption{The general fractal time evolution of a linear CA with $k=m=2$, determinant $1$ and trace $u^{-1}+1+u$. Only the areas where the whole group of $\alpha$s is $0$ is marked white. Thus the image appears to have less white than the colored picture. In the limit of infinite recursion this effect vanishes, thus the fractal that is generated is actually the same.}
  \label{fig:sub_theta}
\end{figure}

In the case of $\Theta$ the connection between the substitution system and the colored picture is very simple; let us take $\xi=\tbinom{1}{0}$ as the initial state. Then the state of cell $x$ after $y$ iterations is 

\begin{equation}\begin{array}{rcl}\sca_x^y \xi&=&\sum\limits_i \pa{\alpha_{\sca,0}(x-i,y) \sca_i^0 + \alpha_{\sca,1}(x-i,y) \sca_i^1}\xi\\& = &\alpha_{\sca,0}(x,y) \xi + \sum\limits_i \alpha_{\sca,1} (x-i,y)\sca_i^1\xi.
\end{array}
\end{equation}
 
Since $\sca_0^1=\pa{\begin{array}{cc}0 & 1 \\ 1 & 1 \end{array}}$, $\Theta_1^1=\sca_{-1}^1=\pa{\begin{array}{cc}0 & 0 \\ 0 & 1 \end{array}}$ and $\Theta_i^1=0$ when $i\not\in\acco{-1;0;1}$, we have $T_x^y\xi=\pa{\begin{array}{c}\alpha_{\sca,0} (x,y) \\ \alpha_{\sca,1}(x,y) \end{array}}$. This gives us a color assignment for each state of the matrix substitution system, which corresponds to simply dropping all states that include neither $B$ nor $F$.

We can now  determine the average hue of the spacetime diagram making use of the eigenvector corresponding to the second largest eigenvalue of the transition matrix \cite{Willson1987}. Let us say $\tbinom{1}{0}$, $\tbinom{0}{1}$ and $\tbinom{1}{1}$ are respectively coded by the colors $c_{10}$, $c_{01}$ and $c_{11}$; let $c_q$ be the white color. We determine which symbols of the alphabet belong to each of the colors by looking only at the part $\tbinom{\alpha_{\sca,0} (x,y)}{\alpha_{\sca,1}(x,y)}$. Then we just add up all the weights of symbols with the same color in the eigenvector. We get the following unnormalised coefficients: $c_{10}$: $2(4+\sqrt{17})$, $c_{01}$: $2(4+\sqrt{17})$, and $c_{11}$: $5+\sqrt{17}$. 

In figure~\ref{fig:theta}, this color code was used: $c_{10}=\scalebox{0.5}{\includegraphics{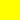}}$, $c_{01}= \scalebox{0.5}{\includegraphics{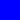}}$ and $c_{11}=
\scalebox{0.5}{\includegraphics{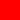}}$. We must therefore have the following average hue: 
\scalebox{0.5}{\includegraphics{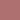}}.

\section*{Conclusion}

We have shown that every cellular automaton inducing a morphism of abelian groups produces a spacetime diagrams that converge to selfsimilar patterns that are superpositions of elementary patterns obtained for groups of prime power order.  We exhibited an algorithm taking as input the local transition rule and outputting a description of these elementary patterns.  
We only studied the one-dimensional case in this article, but the analysis can be carried over to higher dimensions with not much more ado.  Instead of $\joliM_d\pa{R[u,u^{-1}]}$, a $n$-dimensional linear cellular automaton would then be an element of $\joliM_d(R[u_1,u_1^{-1},u_2,u_2^{-1},\ldots u_n,u_n^{-1}])$, matrix substitution systems would become $(n+1)$-dimensional array substitution systems, the system of indices in section $3$ would be further complicated, and the spacetime diagrams would be harder to display.  However, the generalization does not present any theoretical difficulty and makes nice pictures.

For instance, Figure~\ref{fig:fractal3d} is a view of the the fractal structure produced by  $ \Omega=\pa{\begin{array}{cc}0&1\\ 1&(1+u+u^{-1})(1+v+v^{-1})\end{array}}$, a variation on $\Theta$ belonging to $\joliM_2\pa{\Z_2[u,u^{-1},v,v^{-1}]}$.  For obscure historical reasons, pyramids used to be constructed with their points facing up, so to continue the tradition, this fractal is represented upside-down compared to the two-dimensional figures.  Its neighborhood being a square centered on $(0,0)$, the pyramid has unsurprisingly a square base.  We cut it open so as to reveal its insides.  The sector that is cut out is delimited by two half-lines, the one that is visible on the right along $(1,0)$, the sector having an arbitrary angle of approximately $\frac{2}{3}\pi$.

\begin{figure}[htbp]
  \centering
\scalebox{0.155}{
\includegraphics{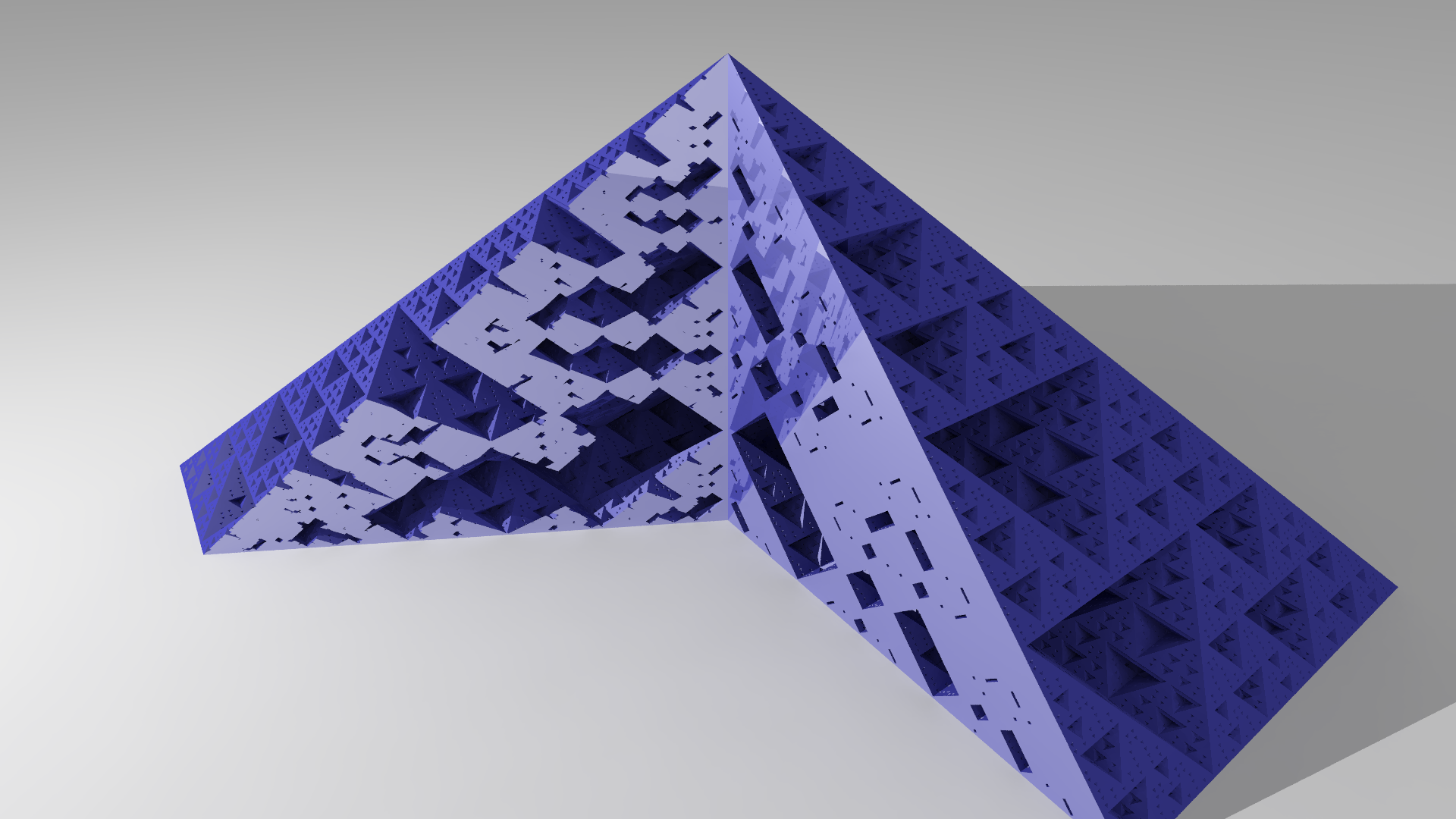}}
\caption{\label{fig:fractal3d}A modern funeral labyrinth.}
\end{figure}

For any positive integer $n$, the coefficient in $v^n$ of $\Omega^n$, seen as an element of $\joliM_2\pa{\Z_2[u,u^{-1}]}[v,v^{-1}]$, is $\pa{\begin{array}{cc}0&0\\ 0&(1+u+u^{-1})^n\end{array}}$.  Therefore, the pattern on the face of the pyramids is given by $(1+u+u^{-1})\in\joliM_1\pa{\Z_2[u,u^{-1}]}$, as can be indeed noticed on the figure, where half of this symmetric pattern is clearly visible on the external face.  By contrast, the coefficient of $\Omega^n$ in $v^0$ is exactly $\Theta^n$, which explains why the cut along $(1,0)$ in Figure~\ref{fig:fractal3d} looks like the half of Figure~\ref{fig:sub_theta} (turned upside-down).

Here are some open questions and possible future developments:

 \begin{itemize}

\item Presumably, the $m$ in Theorem~\ref{bigthm} can always be taken to be $1$. This is known to be true in the cyclic case, i.e. when $d=1$, cf \cite{selfsimilarity}.

\item The algorithm presented in this article, producing a description of the spacetime diagram in the form of a matrix substition system, has a high complexity, due to the large size of its output.  Is this a necessary evil, or can more efficient descriptions be found? Could for instance the more elegant triangle-based substitution scheme presented in section \ref{section_special_scheme} be naturally generalized?

\item To what extent can the algebraic structure be weakened? Instead of the alphabet being an Abelian group, could we consider an Abelian monoid? Is it possible to get rid of commutativity and/or associativity? 
\end{itemize}

\section*{Acknowledgements}

We would like to thank Jean-Paul Allouche, Bruno Durand and Cris Moore for their useful feedback and bibliographical hints.  We gratefully acknowledge the support of the Deutsche Forschungsgemeinschaft (Forschergruppe 635), the EU (projects CORNER and QICS), the Erwin Schr{\"o}dinger Institute and the Rosa Luxemburg Foundation.

\bibliography{frac}

\begin{thebibliography}{10}

\bibitem{Allouche1997}
Jean-Paul Allouche, Fritz~{von} Haeseler, Heinz-Otto Peitgen, Antije Petersen,
  and Guentcho Skordev.
\newblock (November 1997).
\newblock Automaticity of double sequences generated by one-dimensional linear
  cellular automata.
\newblock {\em Theoretical Computer Science}, 188(1--2):195--209.

\bibitem{Allouche1996}
Jean-Paul Allouche, Fritz~{von} Haeseler, Heinz-Otto Peitgen, and Guentcho
  Skordev.
\newblock (April 1996).
\newblock Linear cellular automata, finite automata and {P}ascal's triangle.
\newblock {\em Discrete Applied Mathematics}, 66(1):1--22.

\bibitem{Guetschow2009a}
Johannes G{\"u}tschow.
\newblock (March 2010).
\newblock Entanglement generation of {C}lifford quantum cellular automata.
\newblock {\em Applied Physics B}, 98(4):623--633.

\bibitem{Guetschow2009}
Johannes G\"utschow, Sonja Uphoff, Reinhard~F. Werner, and Zolt\'an Zimbor\'as.
\newblock (January 2010).
\newblock Time asymptotics and entanglement generation of {C}lifford quantum
  celluar automata.
\newblock {\em Journal of Mathematical Physics}, 51(1).

\bibitem{Haeseler1993}
Fritz~{von} Haeseler, Heinz-Otto Peitgen, and Guentcho Skordev.
\newblock (September 1993).
\newblock Cellular automata, matrix substitutions and fractals.
\newblock {\em Annals of Mathematics and Artificial Intelligence},
  8(3--4):345--362.

\bibitem{Haeseler2001all}
Fritz~{von} Haeseler, Heinz-Otto Peitgen, and Guentcho Skordev.
\newblock (2001).
\newblock Self-similar structure of rescaled evolution sets of cellular
  automata.
\newblock {\em International Journal of Bifurcation and Chaos}, 11(4):913--941.

\bibitem{lang}
Serge Lang.
\newblock (1993).
\newblock {\em Algebra}.
\newblock Addison-Wesley publishing company, third edition.

\bibitem{macfarlane}
Alan~J. Macfarlane.
\newblock (2004).
\newblock Linear reversible second-order cellular automata and their
  first-order matrix equivalents.
\newblock {\em Journal of Physics A: Mathematical and General},
  37:10791--10814.

\bibitem{mandelbrot}
Beno\^it~B. Mandelbrot, Yuval Gefen, Amnon Aharony, and Jacques Peyrière.
\newblock (1985).
\newblock Fractals, their transfer matrices and their eigen-dimensional
  sequences.
\newblock {\em Journal of Physics A: Mathematical and General}, 18:335--354.

\bibitem{Moore1997}
Cristopher Moore.
\newblock (1997).
\newblock Quasi-linear cellular automata.
\newblock {\em Physica D}, 103:100--132.

\bibitem{Moore1998}
Cristopher Moore.
\newblock (1998).
\newblock Non-abelian cellular automata.
\newblock {\em Physica D}, 111:27--41.

\bibitem{SchlingemannCQCA}
Dirk~M. Schlingemann, Holger Vogts, and Reinhard~F. Werner.
\newblock (2008).
\newblock On the structure of {C}lifford quantum cellular automata.
\newblock {\em Journal of Mathematical Physics}, 49.

\bibitem{96229}
Satoshi Takahashi.
\newblock (1990).
\newblock Cellular automata and multifractals: dimension spectra linear
  cellular automata.
\newblock {\em Physica D}, 45(1-3):36--48.

\bibitem{selfsimilarity}
Satoshi Takahashi.
\newblock (1992).
\newblock Self-similarity of linear cellular automata.
\newblock {\em Journal of Computer and System Sciences}, 44(1):114--140.

\bibitem{Willson1987}
Stephen~J. Willson.
\newblock (1987).
\newblock Computing fractal dimensions for additive cellular automata.
\newblock {\em Physica D}, 24:190--206.

\end{thebibliography}

\end{document}